\title{Opacity Issues in Games with Imperfect Information}
\author{Bastien Maubert
\institute{ENS Cachan - Bretagne\\
Bruz, 35170, France}
\email{bastien.maubert@irisa.fr}
\and
Sophie Pinchinat
\institute{S4, IRISA, Campus de Beaulieu\\
Rennes, 35042, France}
\email{sophie.pinchinat@irisa.fr}
\and
Laura Bozzelli
\institute{Technical University of Madrid (UPM)\\
28660 Boadilla del Monte, Madrid, Spain
}
\email{laura.bozzelli@fi.upm.es}
}
\begin{document}
\maketitle

\newtheorem{theorem}{Theorem}
\newtheorem{lemma}[theorem]{Lemma}
\newtheorem{property}[theorem]{Property}
\newtheorem{corollary}[theorem]{Corollary}
\newtheorem{proposition}[theorem]{Proposition}
\newtheorem{definition}{Definition}
\newtheorem{remark}{Remark}
\newtheorem{example}{Example}
\newcommand{\theoname}[1]{{\bf (#1)}}
\newcommand{\theocite}[1]{{\bf \cite{#1}}}

\newenvironment{proof}{\trivlist\item[\hskip \labelsep {\bf Proof}\enskip]}%
{\unskip\nobreak\hskip 2em plus 1fil\nobreak%
\fbox{\rule{0ex}{1ex}\hspace{1ex}\rule{0ex}{1ex}}%
\parfillskip=0pt \endtrivlist}

\begin{abstract}
  We study in depth the class of games with opacity condition, which
  are two-player games with imperfect information in which one of the
  players only has imperfect information, and where the winning condition relies on
  the information he has along the play. Those games are relevant for
  security aspects of computing systems: a play is \emph{opaque}
  whenever the player who has imperfect information never ``knows''
  for sure that the current position is one of the distinguished
  ``secret'' positions. We study the problems of deciding the
  existence of a winning strategy for each player, and we call them
  the \emph{opacity-violate problem} and the \emph{opacity-guarantee
    problem}. Focusing on the
  player with perfect information is new in the field of games with imperfect-information because 
  when considering classical winning conditions it amounts to solving the underlying perfect-information game.
  We establish the $\EXPTIME$-completeness of both above-mentioned problems, showing that our winning condition
  brings a gap of complexity for the player with perfect information, and we exhibit the relevant
  \emph{opacity-verify problem}, which noticeably generalizes
  approaches considered in the literature for opacity analysis in
  discrete-event systems. In the case of blindfold games, this problem
  relates to the two initial ones, yielding the determinacy of
  blindfold games with opacity condition and the PSPACE-completeness of the three problems.
\end{abstract}

\section{Introduction}
\label{sec-introduction}
We described in \cite{maubert2009games} a class of two-player games
with imperfect information that we called \emph{games with opacity
  condition}. In these games, the players are \Intruder (for
``robber'') and \Defender (for ``guardian''). \Intruder has imperfect information as
opposed to \Defender who has perfect information. This asymmetric
setting is very relevant for the verification of open systems and all
the more for security aspects as it captures the intuitive picture of
an attacker having only a partial information against a system. 
The game model we consider relies on the classical imperfect-information arenas, as defined in {\it e.g.}
\cite{reif84,berwanger2008power}, but it is equipped with a subset of positions that denote confidential
information and that we call \emph{secrets}. 
We focus on the
opportunity for \Intruder to discover some secret, by introducing the
property of \emph{opacity}: a play is \emph{opaque} if, at each step
of the (infinite) play, the set of positions that are considered
possible by \Intruder does not consist of secrets only.
In games with opacity condition, the opacity property is the winning
condition for \Defender. Informally, \Intruder tries to force the game
to reach some point when he knows for sure that the current position
is a secret, whereas \Defender tries to keep \Intruder under
uncertainty.  Note that this winning condition can be seen as a
particular epistemic temporal logic
statement  \cite{halpern1989complexity} on an imperfect information
arena seen as an epistemic temporal model : this ETL formula is $G\neg
K_{\Intruder} secret$. However, to our knowledge the complexity of deciding the existence
of winning strategies for such winning conditions has never
been studied in depth. 

Our claim that games with opacity condition are natural and adequate
models for practical applications is all the more sustained by very
recent contributions of the literature
\cite{saboori08b,dubreil2008opacity}. These results mainly arise from the
analysis of discrete-event systems and their theory of control, and our games 
embed some problems studied in this domain, such as the verification of opacity. Our
abstract setting provided by the game-theoretical paradigm enables us to
focus on essential aspects of the topic, such as synthesizing
strategies, and to circumvent the complexity of the problems.

Not surprisingly,  games
with opacity condition are not determined  \cite{maubert2009games}. We therefore introduced two
dual problems: the \emph{opacity-violate problem} and the
\emph{opacity-guarantee problem}, that consist of deciding the
existence of a winning strategy, respectively for \Intruder and for
\Defender. The opacity-violate problem generalizes the strategy
problem in reachability games with imperfect information
\cite{reif84}, and so does the opacity-guarantee problem, but putting
the emphasize on the player who has perfect information and has the
complementary safety objective. The latter is, to our knowledge, never
been done, for the following reason. In two-player games with imperfect
information, when considering the existence of winning strategies for
a player, one can equivalently consider that the opponent has perfect
information (see \cite{reif84}).  Thus, when dealing with omega-regular winning
conditions in arenas where the imperfect information is asymmetric,
focusing on the player with perfect information
would amount to solve the underlying perfect-information game. Our
case is different : when considering \Defender's point of view, we
could indeed equivalently consider that \Intruder plays with perfect
information too, but we cannot give up the imperfect-information
setting because the definition of the winning condition itself relies
on \Intruder's information along the play. 

Additionally to the two aforementioned problems, we consider the
\emph{opacity-verify problem} as an intermediate problem: the question
here is to decide whether in a game with opacity condition, all
strategies of \Defender are winning. The choice of considering this
apparently weird problem is well motivated. Firstly, it is equivalent
both to the opacity-guarantee problem and to the complementary of the
opacity-violate problem for blindfold games; an immediate consequence
is the determinacy of blindfold games with opacity condition. And
secondly, it enables us to embed opacity issues in discrete-event
systems with a strong language-theoretic feature, addressed earlier in
the literature \cite{saboori08b,dubreil2008opacity}.

In this contribution, we consider the three problems of
opacity-violate, opacity-guarantee and opacity-verify, keeping in mind
that our main attention turns to the opacity-guarantee problem. It is
not hard to establish the $\EXPTIME$-completeness of the
opacity-violate problem, from a power-set construction inspired by
\cite{reif84} that amounts to solving a reachability
perfect-information game, and from the fact that it generalizes
imperfect-information games with reachability condition, known to be
$\EXPTIME$-complete \cite{reif84}. Regarding the opacity-guarantee
problem, we rely on an earlier power-set construction to reduce this
problem to a perfect-information game \cite{maubert2009games},
yielding $\EXPTIME$ membership. The EXPTIME-hardness result for
this problem, where the main player (\Defender) has perfect
information, was unknown until now and relies on a reduction from the
empty input string acceptance problem for linearly-bounded alternating
Turing machines. The key point is a pioneer encoding of configurations
by information sets. Concerning the opacity-verify problem, we prove
its $\PSPACE$-completeness, which for the lower bound relies on a
reduction similar to the one in \cite{de2006antichains}
from the universality problem for nondeterministic automata
\cite{hopcroft2006automata}. Interestingly, the opacity-verify problem
relates the two other problems for the particular case of
\emph{blindfold games}, in such a way that those games are determined.
We also show that the blindfold setting embraces the
language-theoretic approaches for opacity analysis in discrete-event
systems
\cite{saboori08b,dubreil2008opacity}.

The paper is organized as follows.  In Section~\ref{sec-games}, we
define games with opacity condition.  In
Section~\ref{sec-opviolguarpbs}, we present the opacity-guarantee
problem and the opacity-violate problem, and we establish their
$\EXPTIME$ complexity. We first recall the power-set constructions
from \cite{maubert2009games} yielding the upper bounds, then we show
the matching lower bounds. In Section~\ref{sec-blindfold}, we consider
the opacity-verify problem for blindfold games. In this setting, we
establish the determinacy and the $\PSPACE$ completeness of the three
opacity problems.
In Section~\ref{related}, we relate the opacity-verify problem to the
language opacity verification of
\cite{saboori08b,dubreil2008opacity}. In Section~\ref{sec-discussion},
we discuss complexity aspects of problems regarding \Defender's
winning strategies. We conclude in Section~\ref{sec-perspectives} by
giving some ideas on our current and future work.

\section{Games with opacity  condition}
\label{sec-games}

A \emph{game with opacity condition} over the alphabet $\Sigma$ and
the set of observations $\Gamma$ is an imperfect information game structure $A
=(V,\Delta,\mathrm{\obs},\mathrm{act},v_0,S)$ where
$V$ is a finite set of \emph{positions}, $\Delta:V \times \Sigma \to
2^V\backslash\emptyset$ is a \emph{transition function}, $\mathrm{\obs}: V \to \Gamma$ is
an \emph{observation function}, and $\mathrm{act}: \Gamma \rightarrow
2^\Sigma\backslash \emptyset$ assigns to each observation a non-empty
set of available actions, so that 
available actions are identical for observationally equivalent
positions. Finally, $v_0$ is the initial position, and the additional
element $S\subseteq V$ in the structure $A$ is a finite set of
\emph{secret positions}. 

In a game $A =(V,\Delta,\mathrm{\obs},\mathrm{act},v_0,S)$, the
players are \Defender and \Intruder. A play is an infinite sequence of
rounds, and in each round $i\geq 1$, \Intruder chooses an action $a_i
\in \mathrm{\act}(\mathrm{\obs}(v_{i-1}))$, \Defender chooses the new
position $v_i \in \Delta(v_{i-1},a_i)$, and \Intruder observes \obs$(v_i)$.
A \emph{play} in $A$ is an infinite sequence
$\rho=v_0a_1v_1\ldots \in v_0(\Sigma V)^{\omega}$ that results
from an interaction of \Intruder and \Defender in this game.

We now extend \obs{} to plays by letting
$\mathrm{\obs}(v_0a_1v_1a_2v_2\ldots):=v_0a_1\gamma_1a_2\gamma_2\ldots$
with $\gamma_i=\mathrm{\obs}(v_i)$ for each $i\geq 1$.  The imperfect
information setting leads \Intruder to partially observe a play $\rho$
as $\mathrm{\obs}(\rho)$. Note that since the initial position is a
part of the description of the arena, it is known by \Intruder.

For every natural number $k \in \mathbb{N}$ and play $\rho$, we denote by $\rho^k
\in v_0(\Sigma V)^k$ the $k$-th prefix of $\rho$, defined by
$\rho^k:= v_0 a_1  v_1  \ldots a_k  v_k$, with
the convention that $\rho^0 =  v_0$.  We denote by $\rho^+$ an
arbitrary prefix of $\rho$.

Since the information revealed to \Intruder is
based on observations, a strategy of \Intruder in $A$ is a mapping of
the form $\alpha: v_0(\Sigma\Gamma)^* \to \Sigma$ such that for any
play prefix $\rho^k$ ending in observation $\gamma$,
$\alpha(\mathrm{\obs}(\rho^k))\in\mathrm{\act}(\gamma)$.
On the contrary \Defender has perfect information on how the play progresses, so a
strategy of \Defender in $A$ is a mapping of the form $\beta:
v_0(\Sigma V)^*\Sigma \to V$ such that for any play prefix $\rho^k$ ending in
position $v$, for all $a$ in \act$(\mathrm{\obs}(v))$,
$\beta(\rho^ka)\in\Delta(v,a)$.

Given strategies $\alpha$ and $\beta$ of \Intruder and of
\Defender respectively, we say that a play $\rho = v_0 a_1 v_1 \ldots$ is 
 \emph{induced by }$\alpha$ if $\forall k\ge 1,\, a_k = \alpha(\mathrm{\obs}(\rho^{k-1}))$, and $\rho$ is 
 \emph{induced by }$\beta$ if $\forall k\ge 1,\, v_k =
 \beta(\rho^{k-1}a_k)$. We also note $\alpha \widehat{~} \beta$ the
 only play induced by $\alpha$ and by $\beta$.

In the following, an observation $\gamma$ might be interpreted as the
set of positions it denotes, namely $\mathrm{\obs}^{-1}(\gamma)$.

Let us fix a play $\rho=v_0 a_1v_1 a_2 v_2
\ldots$. Note that every $k$-th prefix of $\rho$ characterizes a unique
\emph{information set} $\infoset(\rho^k) \subseteq V$ consisting of
the set of positions that \Intruder considers possible in the game after $k$ rounds. 
Formally, information sets can be defined inductively as follows. 
\begin{definition}
  \label{defIS}
For every  play $\rho=v_0 a_1v_1 a_2 v_2\ldots$, we let 
$I(\rho^0):=\{v_0\}$ and $I(\rho^{k+1}):=\Delta(I(\rho^k),a_{k+1})\cap \mathrm{\obs}(v_{k+1})$, for $k \in \mathbb N$.
\end{definition}

We now define the opacity property:
\begin{definition}
  For a given set of secret positions $S\subseteq V$, a play $\rho$
  satisfies \emph{the opacity property for $S$}, or is
  $S$-\emph{opaque}, if: 
$$\forall k\in \mathbb N, I(\rho^k)\nsubseteq S$$
\end{definition}

Informally, the opacity condition means that \Intruder never knows
with certainty that the current position is a secret, because there is
always one of the positions he considers possible that is not a secret. 
In a \emph{game with
  opacity condition}, the opacity property is the winning condition
for \Defender, \emph{i.e} $S$-opaque plays are winning for \Defender,
and the other ones are winning for \Intruder.

\begin{remark}
  The definition of the arena and of the opacity condition are
  slightly different from the ones in \cite{maubert2009games} : originally
  \Intruder's aim was to reach a singleton information set. We
  introduce here the set of secret positions and define the winning
  condition accordingly because it makes these games closer to the
  intuition behind opacity. Anyway the results established in
  \cite{maubert2009games} still hold in this setting, and
  adapting the proofs is straightforward.
\end{remark}

\section{Opacity-violate and opacity-guarantee problems}
\label{sec-opviolguarpbs}

It is well known that perfect-information games are determined
\cite{martin75}, and that imperfect-information games are not
determined in general. We recall that a game is \emph{determined} if each position is winning
for one of the two players. 

We proved the following result in \cite{maubert2009games}:

\begin{theorem}
  \label{theo-non-determinacy}
  Games with opacity condition are not determined in general.
\end{theorem}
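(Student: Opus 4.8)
The plan is to prove non-determinacy by exhibiting a single game with opacity condition together with a position from which \emph{neither} player has a winning strategy. Since \Defender{} wins exactly the $S$-opaque plays, it suffices to construct an arena in which simultaneously (i) for every strategy $\alpha$ of \Intruder{} there is a strategy $\beta$ of \Defender{} such that $\alpha \widehat{~} \beta$ is $S$-opaque (so \Intruder{} cannot force a secret to be revealed), and (ii) for every strategy $\beta$ of \Defender{} there is a strategy $\alpha$ of \Intruder{} such that $\alpha \widehat{~} \beta$ is not $S$-opaque (so \Defender{} cannot keep the secret hidden against all plays). The two requirements together say that the initial position is winning for neither player.

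The gadget I would use is a ``matching pennies'' construction that turns the information asymmetry into the source of non-determinacy. From $v_0$ a single (forced) action lets \Defender{} move to one of two positions $x_0,x_1$ that share the \emph{same} observation; intuitively \Defender{} secretly commits to a bit. Neither $v_0$ nor $x_0,x_1$ is a secret. At $x_0,x_1$ the available actions are $\{a,b\}$ (equal, as required, since the two positions are observationally equivalent), and I set $\Delta(x_0,a)=\{s_0\}$, $\Delta(x_0,b)=\{t_0\}$, $\Delta(x_1,a)=\{t_1\}$, $\Delta(x_1,b)=\{s_1\}$, where $s_0,s_1\in S$ are secrets and $t_0,t_1\notin S$ are safe. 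I give $s_0,s_1,t_0,t_1$ four \emph{pairwise distinct} observations, and each self-loops under a single action, so that every play is infinite and, once a safe position is reached, stays $S$-opaque forever. Thus \Intruder's action $a$ or $b$ at the second round is a \emph{guess} of \Defender's bit: a correct guess drives the information set onto a secret, an incorrect one onto a safe position.

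For (i), fix any \Intruder{} strategy $\alpha$. After the hidden move \Intruder{} only sees the common observation of $x_0$ and $x_1$, so $\alpha$ prescribes one fixed guess $g\in\{a,b\}$ independently of \Defender's bit; \Defender{} then answers by committing to the bit that makes $g$ wrong (choose $x_1$ if $g=a$, and $x_0$ if $g=b$), and one checks that the resulting information set is a safe singleton ($\{t_1\}$ or $\{t_0\}$), keeping the play $S$-opaque. For (ii), fix any \Defender{} strategy $\beta$; its value on the forced first round already commits $\beta$ to one of $x_0,x_1$, and I choose $\alpha$ to guess that bit correctly (play $a$ against $x_0$, $b$ against $x_1$). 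The forced successor is then the secret $s_0$ or $s_1$, and the information set collapses to $\{s_0\}$ or $\{s_1\}$, which is contained in $S$, so the play is not $S$-opaque.

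The step that needs care is the observation design, because by Definition~\ref{defIS} the information set is always intersected with the observation of the current position. If the secret successor and the safe successor reachable under the \emph{same} guess shared an observation, the information set after a correct guess would still contain a safe position and opacity would survive, breaking requirement (ii). Giving $s_0,s_1,t_0,t_1$ distinct observations is exactly what guarantees that the information set collapses to the intended singleton in every case; the remaining verification---that the $\mathrm{act}$ constraint for observationally equivalent positions is met and that opacity holds at rounds $0$ and $1$ and forever after a safe position---is routine.
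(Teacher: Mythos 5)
Your construction is correct: the ``matching pennies'' gadget with two observationally equivalent positions $x_0,x_1$ and distinct observations on the four successors does yield a game in which neither player has a winning strategy, which is precisely what non-determinacy requires (the only remaining details, such as making $\Delta$ total on all of $V\times\Sigma$ for unavailable actions, are routine). The paper itself gives no proof---it defers to \cite{maubert2009games}---and the counterexample used there is of exactly this guess-the-hidden-bit type, so your argument is essentially the same approach, made self-contained.
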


This result leads to introduce two dual problems. We remind that
$\alpha$ (resp.\ $\beta$) stands for a strategy of \Intruder (resp.\
\Defender).  We first consider \Intruder's point of view.

\begin{definition}
Given a game with opacity condition $A
  =(V,\Delta,\mathrm{\obs},\mathrm{\act},v_0,S)$, the \emph{opacity-violate problem} is to decide whether the following property holds:
  $$\exists \alpha, \forall \beta,\; \alpha\widehat{~}\beta \mbox{ is not }S\mbox{-opaque}$$
\end{definition}

We now consider \Defender's dual point of view.

\begin{definition}
\label{def-guarantee}
Given a game with opacity condition $A
  =(V,\Delta,\mathrm{\obs},\mathrm{\act},v_0,S)$, the \emph{opacity-guarantee problem} is to decide whether the following property holds:
  $$\exists \beta, \forall \alpha,\; \alpha\widehat{~}\beta \mbox{ is }S\mbox{-opaque}$$
\end{definition}

\begin{remark}
  It is important to comment on Definition~\ref{def-guarantee}  
  regarding the universal quantification over \Intruder's strategies. 
  As defined, this quantification ranges over observation based strategies
  only. The opacity-guarantee problem would however be equivalent if this quantification
  ranged over the wider set of perfect information strategies, as already 
  argumented by Reif in \cite{reif84} : along a play, \Intruder's possible 
  behaviors are not restricted by observation-based strategies.
\end{remark}

In the rest of this section we prove the following result:

\begin{theorem}
\label{complexity}
The opacity-violate and opacity-guarantee problems are EXPTIME-complete.
\end{theorem}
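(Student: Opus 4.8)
The plan is to discharge four separate obligations: membership in \EXPTIME{} and \EXPTIME-hardness for each of the two problems. The two membership results share a common engine, a power-set (knowledge) construction that turns each problem into a perfect-information reachability or safety game over an arena of exponential size; since reachability and safety games are solvable in time polynomial in the size of the arena, this yields the \EXPTIME{} upper bounds. The two hardness results, by contrast, require genuinely different reductions, and the hardness of the opacity-guarantee problem is where essentially all the difficulty lies.

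For the opacity-violate problem, membership follows from a subset construction following Reif. The idea is that, because \Intruder{} is observation-based, only his information set matters, so I would build a perfect-information game whose states are the information sets $I \subseteq V$, with transitions obtained from Definition~\ref{defIS}: from $I$, an action $a$ chosen by \Intruder{} together with an observation $\gamma$ (forced by \Defender{}'s choice of the new position) leads to $\Delta(I,a)\cap\gamma$. \Intruder{} violates opacity exactly when this game reaches a state $I$ with $I\subseteq S$, which is a reachability objective; the arena has at most $2^{|V|}$ states, giving \EXPTIME. For hardness I would observe that the problem directly generalizes reachability games with imperfect information, known to be \EXPTIME-complete~\cite{reif84}: given such a game with absorbing target set $T$, setting $S:=T$ produces an opacity-violate instance in which \Intruder{} can force $I\subseteq T$ if and only if he can force the original play into $T$, since once the information set is contained in an absorbing target the actual position lies in $T$, and conversely.

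For the opacity-guarantee problem, membership again rests on a power-set construction, but a subtler one, because \Defender{} controls the actual transitions and has perfect information while the winning condition is phrased in terms of \Intruder{}'s information. I would therefore track pairs $(v,I)$ with $v\in I\subseteq V$, where $v$ is the real position and $I$ the current information set; \Intruder{} picks an action $a$, \Defender{} picks $v'\in\Delta(v,a)$, and the pair updates to $(v',\,\Delta(I,a)\cap\obs(v'))$. By the Remark following Definition~\ref{def-guarantee}, the universal quantification over \Intruder{}'s strategies may equivalently range over perfect-information strategies, so this is a genuine two-player perfect-information game; \Defender{} must maintain the safety condition $I\not\subseteq S$ forever, and solving this safety game over the exponential arena gives \EXPTIME{} membership, as already established in~\cite{maubert2009games}.

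The main obstacle is \EXPTIME-hardness of the opacity-guarantee problem, the case in which the player of interest has perfect information. Here I would reduce from the empty-input acceptance problem for linearly bounded alternating Turing machines. The heart of the construction is an encoding of each machine configuration by an information set $I$: the positions \Intruder{} deems possible collectively spell out the tape content, head location and control state. The game then simulates one computation step per round, arranging the transition function $\Delta$, \Intruder{}'s action, and the observation of \Defender{}'s chosen position so that the update $\Delta(I,a)\cap\obs(v')$ transforms the information set encoding a configuration into the one encoding its successor. Existential states of the machine are resolved by \Defender{} and universal states by \Intruder{}, and the secret $S$ is designed so that $I\subseteq S$ marks exactly a rejecting configuration; consequently \Defender{} keeps every play opaque if and only if he can steer all branches chosen at universal states away from rejection, that is, if and only if the machine accepts. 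The delicate part, which I expect to consume most of the effort, is designing the one-step gadget so that it simultaneously (i) preserves the intended encoding under the intersection semantics of information sets, (ii) forces the correct player to resolve each nondeterministic choice, and (iii) leaks no information that would let \Intruder{} collapse $I$ prematurely; verifying that the simulation is faithful in both directions then completes the proof.
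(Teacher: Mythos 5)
Your four-part decomposition is exactly the paper's: the two power-set constructions you describe are the games $\tildeA$ and $\hatA$ of \cite{maubert2009games}, so both $\EXPTIME$ upper bounds are fine and identical to the paper's, and both hardness reductions come from the same sources the paper uses. The problems are in the execution of the lower bounds. For opacity-violate hardness you take a game with an \emph{absorbing set of target positions} $T$, put $S:=T$, and justify the equivalence by ``once the information set is contained in an absorbing target the actual position lies in $T$, and conversely.'' The ``conversely'' is false at any fixed time: the real position being in $T$ does not put $I$ inside $T$, since \Intruder{} need not know he has arrived, and $I$ can retain non-target positions long after the play has entered $T$. What you actually need (a strategy that forces the play into $T$ also forces $I\subseteq T$ \emph{eventually}) is true for absorbing $T$, but only via a K\"onig-type argument: every play consistent with the strategy ends up trapped in $T$, hence after some uniform bound every position \Intruder{} still considers possible lies in $T$. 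You give no such argument. The paper avoids the issue entirely by exploiting that Reif's targets are \emph{observations} $F\subseteq\Gamma$: taking $S=\bigcup_{\gamma\in F}\gamma$, reaching a target observation immediately yields $I\subseteq S$, because an information set is always a subset of the current observation; no absorbing hypothesis and no limit argument are needed.

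The second and larger gap is the opacity-guarantee lower bound, which is the real content of the theorem (the paper stresses this is the part that was unknown). You correctly identify the reduction source (empty-input acceptance for linearly-bounded alternating Turing machines), the key idea (configurations encoded as information sets), the assignment of existential choices to \Defender{} and universal ones to \Intruder{}, and the definition of $S$ as the codes of rejecting configurations --- all exactly as in the paper. But you then stop: the ``one-step gadget'' satisfying your conditions (i)--(iii) is precisely what the paper constructs, and nothing in your proposal shows that such a gadget exists. That construction is not routine: positions carry one cell's index and content together with the head data and a tag in $\{L,R,choice\}$; a machine step takes \emph{two} rounds, not one; the observation function separates the tags so that the intersection in Definition~\ref{defIS} re-synchronizes the $n$ positions of the information set; every deviating move of \Intruder{} is absorbed into safe positions where \Defender{} wins; and, crucially, \Intruder{} himself must supply the content of the cell to be scanned next, any wrong guess being a losing deviation. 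Without this construction and its two-direction correctness argument, the $\EXPTIME$-hardness of the opacity-guarantee problem remains an announced plan rather than a proof.
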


 In the following, we adopt the classic
convention that the size of a game is the size of its arena, {\it
  i.e.\ } the number of positions.

\subsection{Power-set constructions for upper bounds}
\label{sec-upbounds}

We recall the power-set constructions of \cite{maubert2009games} that lead to equivalently solve perfect information games.

We first address the opacity-violate problem. Since we consider the point of view of the player with
imperfect information, this problem is close to problems usually studied in games
with imperfect information. This is why we can easily rely on
previous work on the topic to study its complexity. We remind the construction from \cite{maubert2009games},
which is strongly inspired from the one described by Reif in \cite{reif84} :

Let $A =(V,\Delta,\mathrm{\obs},\mathrm{act},v_0,S)$ be a game with
opacity condition. We define a reachability perfect-information game
$\tildeA$, where the players are \tildeintruder and
\tildedefender\footnote{We use the superlative ``Super'' here because
  in general the winning strategies of \tildedefender do not reflect
  any winning strategy of \Defender in $A$. She has ``more
  power'' than \Defender.}.
A position of $\tildeA$ is either $I$ where $I$ is a reachable
information set in $A$ - it is a position of \tildeintruder{}
-, or $(I,a)$ where $I$ is a reachable information set in $A$ and $a \in
\mathrm{\act}(I)$ \footnote{$\mathrm{act}(I)$ makes sense because an information set
is always a subset of a single observation.} - it is a position of \tildedefender.

The game is played as follows. It starts in the initial position
$I_0:=\{v_0\}$ of \tildeintruder. In a position $I$,
\tildeintruder chooses $a \in \mathrm{act}(I)$ and moves to position
$(I,a)$. Next, let $O$ be the set of reachable observations from $I$
by $a$. \tildedefender chooses a next
 information set $\Delta(I,a)\inter\gamma$, where $\gamma$
ranges over $O$. In $\tildeA$, a play
$I_0(I_0,a_1)I_1(I_1,a_2)\ldots$ is winning for \tildeintruder if it
reaches a position of the form $I$ with $I\subseteq S$, otherwise it is
winning for \tildedefender.

\begin{theorem}
\label{theo-naive-algo-tilde}
\cite{maubert2009games} \Intruder has a winning strategy in $A$, if and only if, \tildeintruder has a winning strategy in the
perfect-information game $\tildeA$.
\end{theorem}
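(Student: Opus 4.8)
The plan is to set up an exact correspondence between observation-based strategies of \Intruder in $A$ and strategies of \tildeintruder in $\tildeA$, and to check that it preserves the winning status. The bridge is that, along any play $\rho$, the information set $I(\rho^k)$ is entirely determined by the observed history $\mathrm{\obs}(\rho^k)$ (Definition~\ref{defIS}), and conversely that a history $I_0(I_0,a_1)I_1\cdots I_k$ of \tildeintruder determines the observed history back, since each information set $I_i$ is contained in a single observation $\gamma_i$, so that $\gamma_i=\mathrm{\obs}(v)$ for any $v\in I_i$. Hence $\mathrm{\obs}(\rho^k)\mapsto I(\rho^0)(I(\rho^0),a_1)\cdots I(\rho^k)$ and its inverse are mutually inverse bijections between reachable observed histories of $A$ and reachable histories of \tildeintruder in $\tildeA$. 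I would use them to transport any $\alpha$ to a strategy $\tilde\alpha$ of \tildeintruder and back, so that $\alpha$ prescribes an action on $\mathrm{\obs}(\rho^k)$ exactly when $\tilde\alpha$ prescribes the same action on the corresponding history.

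First I would treat the easy direction: if $\tilde\alpha$ is winning for \tildeintruder then the associated $\alpha$ is winning for \Intruder. Fix any \Defender strategy $\beta$ and let $\rho=\alpha\widehat{~}\beta$. The sequence $I(\rho^0)(I(\rho^0),a_1)I(\rho^1)\cdots$ is a legal play of $\tildeA$ consistent with $\tilde\alpha$: the actions $a_i$ are exactly those prescribed by $\tilde\alpha$ by construction, and each update $I(\rho^{i})=\Delta(I(\rho^{i-1}),a_i)\cap\mathrm{\obs}(v_i)$ is a legal move of \tildedefender towards the reachable observation $\mathrm{\obs}(v_i)$. Being consistent with the winning $\tilde\alpha$, this play reaches some $I\subseteq S$, that is $I(\rho^j)\subseteq S$ for some $j$; thus $\rho$ is not $S$-opaque. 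As $\beta$ is arbitrary, $\alpha$ is winning.

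The converse is the delicate direction, and the obstacle is precisely the surplus power of \tildedefender pointed at in the footnote: she may move to $\Delta(I,a)\cap\gamma$ for any observation $\gamma$ reachable from the whole set $I$, whereas \Defender, sitting at one concrete position, need not be able to emit $\gamma$; so a play of $\tildeA$ cannot in general be echoed by \Defender greedily, position by position. To lift a play $\pi=I_0(I_0,a_1)I_1\cdots$ of $\tildeA$ consistent with $\tilde\alpha$ to a genuine play of $A$, I would exhibit an infinite \emph{thread} $v_0,v_1,\ldots$ with $v_i\in I_i$ and $v_{i+1}\in\Delta(v_i,a_{i+1})$ for every $i$. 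Such a thread exists by K\"onig's lemma: the tree of finite threads is finitely branching since each $I_i$ is finite, and it is infinite since for every $j$ a length-$j$ thread is obtained backwards, starting from any $v_j\in I_j$ and using $I_{i+1}\subseteq\Delta(I_i,a_{i+1})$ to choose a predecessor $v_i\in I_i$ down to $v_0$. The thread is a legal play prefix of $A$ whose induced information sets are exactly the $I_i$, since these depend only on the common observed history; letting \Defender follow the thread defines a $\beta$ for which $\alpha\widehat{~}\beta$ has information-set sequence $\pi$.

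Finally I would assemble the equivalence. If $\alpha$ is winning, then for the induced $\tilde\alpha$ every play $\pi$ of $\tildeA$ consistent with $\tilde\alpha$ lifts, by the thread construction, to a play $\alpha\widehat{~}\beta$ whose information-set sequence is $\pi$; this play is not $S$-opaque, so $\pi$ reaches some $I\subseteq S$, which proves $\tilde\alpha$ winning. Together with the easy direction, and through the strategy bijection of the first paragraph, this shows that \Intruder has a winning strategy in $A$ if and only if \tildeintruder has one in $\tildeA$. The main difficulty throughout is the lifting step, namely realizing an abstract information-set play by concrete positions; K\"onig's lemma (equivalently, the finiteness of the arena) is exactly what renders \tildedefender's extra power harmless at the level of plays, even though that power does block a direct strategy-to-strategy translation on \Defender's side.
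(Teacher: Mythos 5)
Your proposal is correct. Note that this paper does not actually prove Theorem~\ref{theo-naive-algo-tilde}: it is quoted from \cite{maubert2009games}, so there is no in-paper argument to compare against; judged on its own, your proof is sound and self-contained. The two directions are handled properly: the easy direction (projecting a play of $A$ onto its information-set sequence, which is a play of $\tildeA$ consistent with the transported strategy) uses only the routine induction that $v_i\in I(\rho^i)$ along any play, and your hard direction correctly identifies and solves the one genuinely delicate point, namely that \tildedefender may steer toward any observation reachable from the \emph{set} $I$, not just from the concrete position \Defender occupies --- exactly the asymmetry the paper's footnote alludes to. Your resolution --- lifting an infinite $\tildeA$-play to a concrete play of $A$ via a K\"onig's lemma argument on the tree of finite threads $v_0,\ldots,v_j$ with $v_i\in I_i$ and $v_{i+1}\in\Delta(v_i,a_{i+1})$, where arbitrarily long threads exist by choosing predecessors backwards from any $v_j\in I_j$ --- is precisely what is needed, since whether $\tilde\alpha$ fails to win is witnessed only by an infinite play avoiding all positions $I\subseteq S$, so finite-prefix realizability alone would not suffice. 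The only details left implicit (that reachable information sets are nonempty and contained in a single observation class, so that the history bijection is well defined, and that the lifted play is induced by some \Defender strategy $\beta$) are routine and harmless.
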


Due to nondeterminacy (Theorem~\ref{theo-non-determinacy}), the opacity-guarantee problem has to be studied on its own.
We remind the power-set
construction for the opacity-guarantee problem described in \cite{maubert2009games}, that leads to a 
safety perfect-information game
$\hatA$. In this game, unlike in $\tildeA$, we maintain an extra information on
how \Defender is playing in $A$. The players in $\hatA$
are \hatintruder\footnote{we use the superlative ``Super'' as,
  contrary to what \tildeintruder could do in the game
  $\tildeA$, \hatintruder can take advantage of the extra
  information.} and \hatdefender. A position in $\hatA$ is
either of the form $(I,v)$ where $I$ is a reachable information set in
$A$, and $v \in I$ - it is a position of \hatintruder{} -, or
of the form $(I,v,a)$ where $I$ is a reachable information set in
$A$, $v \in I$, and $a \in \mathrm{act}(I)$ - it is a position
of \hatdefender{}. The initial position is $(\{v_0\},v_0)$.
In position $(I,v)$, \hatintruder chooses $a \in \mathrm{\act}(I)$, and
moves to $(I,v,a)$. In position $(I,v,a)$, \hatdefender chooses $v'
\in \Delta(v,a)$ and moves to $(I',v')$ where $I'= \Delta(I,a)\inter
\mathrm{\obs}(v')$. In $\hatA$, a play
$(I_0,v_0)(I_0,v_0,a_1)(I_1,v_1)\ldots$ is winning for \hatintruder if
it reaches a position $(I,v)$ with $I\subseteq S$, otherwise it is
winning for \hatdefender.

\begin{theorem}
  \label{theo-naive-algo-hat}
  \cite{maubert2009games} \Defender has a winning strategy in $A$, if and only if, \hatdefender has a winning strategy in the
  perfect-information game $\hatA$.
\end{theorem}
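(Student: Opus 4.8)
The plan is to set up a winning-condition-preserving bijection between the plays of $A$ and the plays of $\hatA$, and then to transport strategies across it. First I would note that every play $\rho = v_0 a_1 v_1 a_2 \ldots$ of $A$ gives rise to a unique play
$$\hat\rho = (I_0,v_0)(I_0,v_0,a_1)(I_1,v_1)(I_1,v_1,a_2)\ldots$$
of $\hatA$, where $I_k = I(\rho^k)$ for all $k$. This is forced: the update rule of the first component in $\hatA$, namely $I' = \Delta(I,a)\cap\mathrm{\obs}(v')$, is literally the inductive clause defining information sets in Definition~\ref{defIS}, so the $I$-component reached at round $k$ is exactly $I(\rho^k)$. (One checks here that the move sets match, using that $\mathrm{act}(I)=\mathrm{act}(\mathrm{\obs}(v))$ for $v\in I$ and that $v\in I$ is preserved by the update.) Conversely, forgetting the information-set components sends a play of $\hatA$ to a play of $A$, and the two maps are mutually inverse; hence $\rho \mapsto \hat\rho$ is a bijection.

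Next I would check that this bijection matches the two winning conditions. Since the \hatintruder-position visited at round $k$ is $(I(\rho^k),v_k)$, the play $\hat\rho$ reaches some position $(I,v)$ with $I \subseteq S$ precisely when $I(\rho^k)\subseteq S$ for some $k$, i.e. precisely when $\rho$ fails to be $S$-opaque. Thus $\hat\rho$ is won by \hatdefender if and only if $\rho$ is won by \Defender.

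It then remains to transport winning strategies along the bijection. For one direction, given a winning strategy $\beta$ of \Defender I would define $\hat\beta$ on a \hatdefender-history ending in $(I_{k-1},v_{k-1},a_k)$ by projecting that history to the prefix $\rho^{k-1}a_k$ of $A$, putting $v_k := \beta(\rho^{k-1}a_k)$, and moving to $(I_k,v_k)$ with $I_k = \Delta(I_{k-1},a_k)\cap\mathrm{\obs}(v_k)$; legality, i.e. $v_k\in\Delta(v_{k-1},a_k)$, is inherited from $\beta$. Every play consistent with $\hat\beta$ then projects to a play of $A$ that is induced by $\beta$, hence $S$-opaque, hence by the previous paragraph won by \hatdefender. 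The converse direction is symmetric: from a winning $\hat\beta$ one reconstructs, for each prefix $\rho^{k-1}a_k$, the corresponding \hatdefender-history (the information-set components being determined by the prefix), reads off \hatdefender's response, and takes it as $\beta(\rho^{k-1}a_k)$.

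The step I expect to require the most care is the alignment of the opponents' strategy classes. In $A$, \Defender's strategy is quantified against \emph{observation-based} strategies of \Intruder (Definition~\ref{def-guarantee}), whereas in the perfect-information game $\hatA$ the \hatintruder sees the whole history, including the information-set components. The reconciliation is exactly the content of the Remark following Definition~\ref{def-guarantee}: the opacity-guarantee property is unchanged when the universal quantification ranges over perfect-information strategies of \Intruder, because along any fixed play \Intruder's admissible action choices are not constrained by observation-basedness. Invoking that Remark, ``$\beta$ is winning'' becomes equivalent to ``every play induced by $\beta$ is $S$-opaque'', and since the extra component $I_k$ of $\hatA$ is a deterministic function of the history it grants \hatintruder no power beyond that of a perfect-information \Intruder. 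This identifies the opponent classes on both sides and, combined with the play bijection above, closes the equivalence in both directions.
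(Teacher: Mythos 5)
Your proposal is correct. One point of comparison worth noting: this paper does not itself prove Theorem~\ref{theo-naive-algo-hat}; it only recalls the power-set construction of $\hatA$ and cites \cite{maubert2009games} for the equivalence, so there is no in-paper proof to match yours against beyond the construction itself. Your argument --- the canonical bijection between plays of $A$ and plays of $\hatA$ (the $I$-components being forced, since the update rule of $\hatA$ is literally the inductive clause of Definition~\ref{defIS}), preservation of the winning condition under this bijection, and transport of strategies in both directions --- is exactly the proof this construction is designed to support. You also correctly isolate and discharge the one step that is easy to get wrong: in $A$ the opacity-guarantee property quantifies over observation-based strategies of \Intruder, whereas \hatintruder plays with perfect information; the reconciliation via the Remark following Definition~\ref{def-guarantee} (equivalently, the observation that every play induced by $\beta$ arises from some observation-based $\alpha$, because prefixes of a single play have pairwise distinct observation sequences, so no consistency constraint ever binds) is precisely what makes both directions of the strategy transport sound.
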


It is well known that perfect-information reachability games and perfect-information safety games are solvable
in $\PTIME$. Since the constructions of $\tildeA$ and $\hatA$ involve a single exponential blow-up, it follows from 
Theorems~\ref{theo-naive-algo-tilde} and \ref{theo-naive-algo-hat} that the opacity-violate
and opacity-guarantee problems are in $\EXPTIME$.

\subsection{Matching lower bounds}
\label{sec-lowbounds}

We prove here that the opacity-violate and the opacity-guarantee problems are $\EXPTIME$-hard.

First, $\EXPTIME$-hardness of the opacity-violate problem is proved by a reduction from
 reachability imperfect-information
games of \cite{reif84}. Recall that a \emph{reachability imperfect-information game}
is a game of imperfect information $A =
(V,F,\Delta,\mathrm{\obs},\mathrm{act},v_0)$ over $\Sigma$ and $\Gamma$ with a distinguished set of
\emph{target observations} $F \subseteq \Gamma$ that \Intruder aims at reaching.

\begin{theorem}
\cite{reif84}
Solving reachability imperfect-information games is $\EXPTIME$-complete.
\end{theorem}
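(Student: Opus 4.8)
The plan is to prove the two matching bounds separately, since the $\EXPTIME$ upper bound is essentially routine while the hardness is the substantive part.

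For membership in $\EXPTIME$, I would apply the knowledge-based power-set construction, i.e.\ the same idea underlying the construction of $\tildeA$ in the previous subsection, specialised to a plain reachability objective. Given a reachability imperfect-information game $A=(V,F,\Delta,\obs,\mathrm{act},v_0)$, one builds a perfect-information reachability game whose positions are the reachable information sets $I\subseteq V$ (together with the intermediate action-labelled positions), of which there are at most $2^{|V|}$. \Intruder has a winning strategy in $A$ exactly when, in this perfect-information game, he can force the play into an information set $I$ all of whose positions carry a target observation, i.e.\ \obs$(I)\subseteq F$. Perfect-information reachability games are solvable in time polynomial in their size, hence in time $2^{O(|V|)}$, which yields the $\EXPTIME$ upper bound.

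For $\EXPTIME$-hardness, I would reduce from the acceptance problem for alternating linearly-bounded Turing machines, which is $\EXPTIME$-complete because alternating polynomial space coincides with deterministic exponential time. Given such a machine $M$ and an input $w$ of length $n$, the goal is to build, in polynomial time, a reachability imperfect-information game in which \Intruder has a winning strategy if and only if $M$ accepts $w$. The central device is to encode a configuration of $M$ --- the $O(n)$ tape cells, the head position and the control state --- not in an explicit (and therefore exponentially large) position, but in an \emph{information set}, so that the exponentially many configurations of $M$ correspond to the exponentially many information sets produced by the power-set construction. Existential states of $M$ are simulated by \Intruder's action choices and universal states by \Defender's position choices, the asymmetry of information being precisely what prevents the reduction from collapsing to a polynomial-time perfect-information game; an accepting configuration is reached exactly when the play enters a target observation.

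The heart of the construction, and the step I expect to be the main obstacle, is the gadget forcing the players to update the encoded configuration faithfully according to $M$'s transition relation. Here imperfect information does the work: when a player commits to a successor configuration, the opponent is allowed to ``challenge'' an arbitrary tape cell, and because the challenging choice is hidden from the committing player, the latter cannot tailor his announcement to a single cell and is therefore compelled to respect the local transition rule of $M$ at every cell simultaneously; any local deviation is caught on some challenge and loses the play. Proving correctness then reduces to an invariant: along every play that neither player has yet lost, the current information set encodes exactly the genuine $M$-configuration reached so far, from which reaching an accepting configuration is equivalent to \Intruder forcing a target observation. Establishing this invariant, and in particular that cheating is always punishable regardless of which cell is challenged, is the delicate part of the argument.
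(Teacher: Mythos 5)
The statement you were asked to prove is one the paper itself never proves: it is quoted from Reif \cite{reif84}, and the paper only uses it as the \emph{source} of a reduction (reachability imperfect-information games are reduced \emph{to} the opacity-violate problem to obtain the paper's own hardness result), so there is no in-paper proof to compare against. Judged on its own, your reconstruction is the standard argument and is sound in outline, and both halves mirror machinery the paper deploys for adjacent results: your upper bound is precisely the power-set construction recalled as $\tildeA$ in Section~\ref{sec-upbounds}, specialised to a reachability objective (your condition that the information set lies inside the targets is equivalent to the current observation being a target, since an information set is always contained in a single observation class, as the paper remarks), and your lower bound is Reif's private-challenge reduction from linearly-bounded alternating Turing machines, which is also the germ of the paper's own hardness proof for the opacity-guarantee problem (Theorem~\ref{correctness}, proved in \cite{bozzelliMaubertPinchinat11a}): a configuration of length $n$ is spread cell-by-cell across an information set of $n$ positions drawn from a polynomial-size position set. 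One point in your sketch needs repair before it becomes a proof: the invariant ``along every play that neither player has yet lost, the information set encodes the genuine configuration'' is false as stated, because against a \Defender strategy whose hidden challenge sits on a cell other than the one where \Intruder cheats, the play is never interrupted, yet the encoded configuration diverges from every genuine one. What is true is that the information set encodes the \emph{announced} configuration; faithfulness is enforced not within a single play but by the quantification over all \Defender strategies --- a cheating observation-based strategy of \Intruder plays the same announcements whatever the hidden challenge is, hence is beaten by the particular \Defender strategy watching the first falsified cell, and so cannot be winning. With the invariant restated this way, both directions of the correctness argument go through.
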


The reduction is straightforward. Let $A =
(V,F,\Delta,\mathrm{\obs},\mathrm{\act},v_0)$ be a reachability
imperfect-information game over $\Sigma$ and $\Gamma$. We define the
game with opacity condition $A' :=
(V,\Delta,\mathrm{\obs},\mathrm{\act},v_0,S)$ over $\Sigma$ and
$\Gamma$, where $S=\bigcup_{\gamma \in F}\gamma$.
It is easy to see that solving the reachability
imperfect-information game $A$ is equivalent to solving the
opacity-violate problem in the game $A'$ : a winning strategy for
\Intruder to reach $F$ in $A$ is also a winning strategy for \Intruder in
$A'$, and vice versa (remember that the information set is always a
subset of the current observation).

We now show that the opacity-guarantee problem is $\EXPTIME$-hard by a
polynomial-time reduction from the acceptance problem of the empty
input string for \emph{linearly-bounded alternating} Turing Machines
(TM) with a binary branching degree, which is $\EXPTIME$-complete
\cite{chandra1976alternation}. The key idea is to encode TM configurations by the
information sets.

In the rest of this section, we fix such a TM machine
$\mathcal{M}=(B,Q=Q_{\forall}\cup Q_{\exists}\cup
\{q_{acc},q_{rej}\},q_0,\delta)$, where $B$ is the input alphabet,
$Q_\exists$ (resp. $Q_{\forall}$) is the set of existential (resp.
universal) states, $q_0\in Q$ is the initial state, $q_{acc}\notin
Q_{\forall}\cup Q_{\exists}$ is the (terminal) accepting state,
$q_{rej}\notin Q_{\forall}\cup Q_{\exists}$ is the (terminal)
rejecting state, and $\delta: (Q_{\forall}\cup Q_{\exists})\times
B\rightarrow (Q \times B \times \{+1,-1\})\times (Q \times B \times
\{+1,-1\})$ is the transition function.  In each non-terminal step
(i.e., the current state is in $Q_{\forall}\cup Q_{\exists}$),
$\mathcal{M}$ overwrites the tape cell being scanned, and the tape
head moves one position to the left ($-1$) or right ($+1$).  Let $n$
be the size of $\mathcal{M}$ and $[n]=\{1,\ldots,n\}$. We assume that
$n>1$.

Since $\mathcal{M}$ is linearly bounded, we can assume that
$\mathcal{M}$ uses exactly $n$ tape cells when started on the
\emph{empty} input string $\varepsilon$. Hence, a  configuration (of
$\mathcal{M}$ over $\varepsilon$) is a word $C=w_1\, (q,b)\, w_2\in
B^*\cdot (Q\times B)\cdot B^*$ of length exactly $n$ denoting that the
tape content is $w_1\,b\,w_2$, the current state is $q$, and the tape
head is at position $|w_1|+1$. The initial configuration $C_{init}$ is
given by $(q_0,\textvisiblespace)\,\textvisiblespace^{n-1}$, where $\textvisiblespace$ is the blank
symbol. Moreover, without loss of generality, we assume that when started on $C_{init}$,
no matter what are the universal and existential choices,
$\mathcal{M}$ always \emph{halts} by reaching a terminal configuration
$C$, i.e. such that the associated state, written $q(C)$, is in
$\{q_{acc},q_{rej}\}$ (this assumption is standard, see
\cite{chandra1976alternation}). For a non-terminal configuration $C=w_1\, (q,b)\, w_2$
(i.e. such that $q\in Q_\exists\cup Q_\forall$), we denote by
$succ_L(C)$ (resp. $succ_R(C)$) the  successor of $C$ obtained by
choosing the left (resp. the right) triple in $\delta(q,b)$.  An
\emph{accepting computation tree} of $\mathcal{M}$ over $\varepsilon$
is a finite tree $T$ whose nodes are labeled by  configurations and
such that the root is labeled by $C_{init}$, the leaves are labeled by
accepting  configurations $C$, i.e.  $q(C)=q_{acc}$, each internal
node $x$ is labeled by a non-terminal  configuration $C$, and: (1)
if $C$ is existential (i.e., $q(C)\in Q_\exists$), then $x$ has
exactly one child whose label is one of the two successors of $C$, and
(2) if $C$ is universal (i.e., $q(C)\in Q_\forall$), then $x$ has
exactly two children corresponding to the two successors $succ_L(C)$
and $succ_R(C)$ of $C$. We construct a game with opacity condition
$A_{\mathcal{M}}$ such that \Defender has a
winning strategy in $A_{\mathcal{M}}$ if, and only if,  there is an accepting
computation tree of $\mathcal{M}$ over $\varepsilon$ (Theorem~\ref{correctness}). Hence,
$\EXPTIME$-hardness of the opacity-guarantee problem follows.

In the game $A_{\mathcal{M}}$, the tape content can be retrieved from
the current information set (of size $n$), and the remaining information
about the current configuration is available in each position of the
information set. A step of the machine is simulated by two rounds of
the game: in the first round, depending on whether the current state
is universal or existential, \Intruder simulates the universal choice
of the next configuration or \Defender simulates the existential
choice, and the second round simulates the updating of the
configuration of the machine.
 
\noindent Here, we describe the construction of the game
$A_{\mathcal{M}}=(V,\Delta,\mbox{\obs},\mbox{\act},v_0,S)$.
\begin{enumerate}
\item \noindent
  \text{\hspace{0cm}}$V=\{v_0,safe_L,safe_R,safe_{choice}\}\cup
  \bigl(([n]\times B)\times([n]\times Q\times
  B)\times\{L,R,choice\}\bigr)$.

  \item  $\mbox{\obs}: V \to \Gamma=\{\gamma_0,\gamma_{choice},\gamma_L,\gamma_R\}$ is defined by
    \begin{displaymath}
  \begin{array}{ll}
\text{\hspace{0.4cm}} \mbox{\obs}(v)  & = \left\{
  \begin{array}{ll}
    \gamma_0
    &   \,\,\,\,\,\textrm{ if }  v=v_0
    \\
    \gamma_L
    &   \,\,\,\,\,\textrm{ if } 
    v\in \{safe_L\}\cup \bigl(([n]\times  B)\times([n]\times Q\times B)\times\{L\}\bigr)
    \\
    \gamma_R
    &   \,\,\,\,\,\textrm{ if }  v\in \{safe_R\}\cup \bigl(([n]\times  B)\times([n]\times Q\times B)\times\{R\}\bigr)
    \\
    \gamma_{choice}
    &  \,\,\,\,\,\textrm{  }
    \text{otherwise.}
\end{array}
             \right.\\
\end{array}
\end{displaymath}
\item $\mbox{\act}: \Gamma \to\Sigma=\{\forall_L,\forall_R,\exists\}\cup
  B$ is defined by
    \begin{displaymath}
  \begin{array}{ll}
\text{\hspace{0.4cm}} \mbox{\act}(\gamma)  & = \left\{
  \begin{array}{ll}
 \Sigma
             &   \,\,\,\,\,\textrm{ if }  \gamma=\gamma_0
             \\
 \{\forall_L,\forall_R,\exists\}
             &   \,\,\,\,\,\textrm{ if }  \gamma=\gamma_{choice}
             \\
  B
       &  \,\,\,\,\,\textrm{  }
       \text{otherwise.}
\end{array}
             \right.\\
\end{array}
\end{displaymath}
 \item $S=([n]\times  B)\times([n]\times \{q_{rej}\}\times B)\times\{choice\}$.
\end{enumerate}
We delay the formal definition of $\Delta:V\times\Sigma\rightarrow
2^V\backslash\emptyset$ after informally describing the running of the
game.

A  configuration $C$ is encoded by an \emph{information set}
$I_{f}(C)$ of the form
$$\{((1,b_1),(i,q(C),b_i),f),\ldots,((n,b_n),(i,q(C),b_i),f)\}$$
\noindent where $f\in \{L,R,choice\}$, $i$ is the position of the tape
cell of $C$ being scanned, and for each $1\leq j\leq n$, $b_j$ is the
content of the $j$-th cell. For each $f\in \{L,R,choice\}$, $I_{f}(C)$
is called the \emph{$f$-code} of $C$, and during a play, the current
information set is of the form $I_{f}(C)$ for some reachable
configuration $C$ of the machine, unless \Intruder happened to have made
some \emph{deviating} move which does not simulate the dynamics of $\mathcal M$.
We capture this deviation by
making \Intruder lose: technically, the play enters one of the
\emph{safe} positions $safe_L,safe_R$, or $safe_{choice}$ that do not
belong to the set $S$ of secrets; then, once a safe position is reached,
only other safe positions can be reached, yielding \Defender to win,
whatever \Intruder does in the future. Note that for each $f\in
\{L,R\}$, $I_f(C)$ does not violate the opacity condition for
$S$, and $I_{choice}(C)$ violates the opacity condition for $S$
if, and only if,  $C$ is rejecting (i.e. $q(C)=q_{rej}$).  For all $q\in
Q_\exists\cup Q_\forall$ and $b\in B$, we denote by $\delta_{L}(q,b)$
(resp. $\delta_{R}(q,b)$) the left (resp. right) triple in
$\delta(q,b)$.  The behavior of $A_{\mathcal{M}}$ is as follows:

\begin{description}
\item{\it First round}: From the initial position $v_0$, whatever \Intruder
and \Defender choose,
the information set at the end of the first round is $I_{choice}(C_{init})$, the $choice$-code
of the initial  configuration. \vspace{0.1cm}

\item{\it The current information set is $I_{choice}(C)$ for some
    terminal configuration $C$}: If $C$ is rejecting, then
  $I_{choice}(C)\subseteq S$ and \Defender loses. Otherwise,
  $I_{choice}(C)\not\subseteq S$ and independently of the move of
  \Intruder, the play reaches a safe position $safe_{dir}$ for some
  $dir\in \{L,R\}$ and \Defender wins.
\end{description}
As we shall see, there remain only two cases, which in turn simulate a
complete step of $\mathcal{M}$.
\begin{description}
\item{\it The current information set is $I_{choice}(C)$ for some
    non-terminal configuration $C$}: \\Let
  $v=((k,b_k),(i,q(C),b_i),choice)$ be the current position
  (corresponding to some position in $I_{choice}(C)$). From $\mbox{\obs}(v)$,
  \Intruder can only choose actions in
  $\{\exists,\forall_L,\forall_R\}$. There are again two cases.
\begin{description}
\item{\it $C$ is existential (note that this information is contained
  in the position $v$)}. 
Moves $\forall_L$ and $\forall_R$ of \Intruder are deviating
and the play reaches one of the safe positions $safe_L$ or $safe_R$,
thus \Defender wins. If instead \Intruder's move is $\exists$, the
following move $dir\in \{L,R\}$ of \Defender aims at simulating the
existential choice of $\mathcal{M}$ in the configuration configuration
$C$. The reached position is then
$v'=((k,b_k),(i,q(C),b_i),dir)$. 

\item{\it $C$ is universal}. The move $\exists$ of \Intruder is
  deviating and the following move of \Defender can lead only to
  $safe_L$ or $safe_R$, which makes him win. Instead \Intruder's move
  $\forall_{dir} \in \{\forall_L,\forall_R\}$ simulates the universal
  choice of $\mathcal{M}$ in the configuration $C$. Next, \Defender's
  move is unique and leads to the position
  $v'=((k,b_k),(i,q(C),b_i),dir)$.
\end{description}
Whatever the type of the configuration $C$ was, by letting the
observation classes split positions with different values of
$dir$ (see the definition of $\mbox{\obs}$ above), the information set
after the move of \Defender becomes $I_{dir}(C)$, unless \Intruder's move was deviating.
\item{\it The current information set is $I_{dir}(C)$ with $dir\in
    \{L,R\}$, for some non-terminal configuration $C$}: \\Let
  the current position be $v=((k,b_k),(i,q(C),b_i),dir)\in I_{dir}(C)$, and let\\
  $\delta_{dir}(q(C),b_i)=(q_{dir},b_{dir},\theta_{dir})$. The value
  $j=i+\theta_{dir}$ represents the position of the cell being scanned
  in the next configuration $succ_{dir}(C)$; note that the value $j$
  is easily computable from the current position $v$. In order however
  to complete the step of the machine and to reach the information set
  $I_{choice}(succ_{dir}(C))$, the value of $b_j$ must be provided by
  the game. Therefore, we let $b_j$ be the only non-deviating move of
  \Intruder from position $v \in I_{dir}(C)$, among the possible moves
  in $B$.

  From position $v=((k,b_k),(i,q(C),b_i),dir)$, the above behavior is
  implemented as follows.  Let $b$ be the action chosen by \Intruder.
  If $k\notin\{i,j\}$, tape cell $k$ is unchanged by the step of the
  machine, hence the only possible move of \Defender leads to
  $((k,b_k),(j,q_{dir},b),choice)$. If $k=i$, tape cell $i$ is
  overwritten, hence the move of \Defender is unique and leads to
  $((i,b_{dir}),(j,q_{dir},b),choice)$.  Finally, if $k=j$, there are
  two cases. If $b=b_j$, then \Defender can only move to
  $((j,b_j),(j,q_{dir},b_j),choice)$ which updates the data for the
  next configuration $succ_{dir}(C)$, otherwise the move $b$ ($\neq
  b_j$) of \Intruder is deviating (and the play reaches a safe position).
\end{description}

We can now formally define the moves in $A_{\mathcal{M}}$, by letting $\Delta : V\times\Sigma\rightarrow
2^V\backslash\emptyset$ be:\\

 \noindent \textbf{Case $v=v_0$:}
   \begin{displaymath}
\Delta(v,a) =  \{((h,\textvisiblespace),(1,q_0,\textvisiblespace),choice)\mid h\in [n]\}
\end{displaymath}

\noindent \textbf{Case $v=safe_{choice}$:}
   \begin{displaymath}
  \Delta(v,a)  =
 \{safe_{dir}\mid dir\in \{L,R\}\}
\end{displaymath}

\noindent \textbf{Case $v=safe_{dir}$}, where $dir\in \{L,R\}$:
    \begin{displaymath}
   \Delta(v,a)   = \{safe_{choice}\}
\end{displaymath}

\noindent \textbf{Case $v=((h,b),(i,q,b'),choice)$:}
   \begin{displaymath}
  \begin{array}{ll}
\noindent \text{\hspace{0.0cm}} \Delta(v,a)  & = \left\{
  \begin{array}{ll}
 \{((h,b),(i,q,b'),dir)\mid dir\in \{L,R\}\}
             &   \,\,\,\,\,\textrm{ if }  a=\exists \text{ and }q\in Q_\exists
             \\
 \{((h,b),(i,q,b'),L)\}
             &   \,\,\,\,\,\textrm{ if }  a=\forall_L \text{ and }q\in Q_\forall
             \\
  \{((h,b),(i,q,b'),R)\}
             &   \,\,\,\,\,\textrm{ if }  a=\forall_R \text{ and }q\in Q_\forall
             \\
  \{safe_{dir}\mid dir\in \{L,R\}\}
       &  \,\,\,\,\,\textrm{  }
       \text{otherwise}
\end{array}
             \right.\\
\end{array}
\end{displaymath}

\noindent \textbf{Case $v=((h,b),(i,q,b'),dir)$}, where $dir\in
\{L,R\}$, $q\notin \{q_{rej},q_{acc}\}$, and
$\delta_{dir}(q,b')=(q_{dir},b_{dir},\theta_{dir})$:
   \begin{displaymath}
  \begin{array}{ll}
\text{\hspace{0.0cm}} \Delta(v,a)  & = \left\{
  \begin{array}{ll}
 \{((h,b),(i+\theta_{dir},q_{dir},a),choice)\}
             &   \,\,\,\,\,\textrm{ if }   a\in B \text{ and } h\notin \{i,i+\theta_{dir}\}
             \\
 \{((h,b_{dir}),(i+\theta_{dir},q_{dir},a),choice)\}
             &   \,\,\,\,\,\textrm{ if }   a\in B \text{ and }  h=i
             \\
  \{((h,b),(i+\theta_{dir},q_{dir},b),choice)\}
             &   \,\,\,\,\,\textrm{ if }   a=b\text{ and } h=i+\theta_{dir}
             \\
  \{safe_{choice}\}
       &  \,\,\,\,\,\textrm{  }
       \text{otherwise}
\end{array}
             \right.\\
\end{array}
\end{displaymath}
\noindent \textbf{Case $v=((h,b),(i,q,b'),dir)$}, where $dir\in \{L,R\}$ and $q\in \{q_{rej},q_{acc}\}$:
   \begin{displaymath}
   \Delta(v,a)   = \{((h,b),(i,q,b'),choice)\}
\end{displaymath}

This achieves the construction of the game $A_{\mathcal M}$ which
satisfies the following result:

\begin{theorem}\label{correctness}
  {\bf\cite{bozzelliMaubertPinchinat11a}} There is an accepting computation tree of $\mathcal{M}$ over
  $\varepsilon$ if, and only if, there is a winning strategy of \Defender
  in the game $A_{\mathcal{M}}$.
\end{theorem}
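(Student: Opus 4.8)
The plan is to prove the two implications separately, in both directions exploiting the correspondence between the dynamics of $\mathcal{M}$ and the evolution of the information sets, together with the two facts already noted: for $f\in\{L,R\}$ the set $I_f(C)$ is never contained in $S$, while $I_{choice}(C)\subseteq S$ holds exactly when $C$ is rejecting. The backbone is an invariant stating that, as long as \Intruder plays non-deviating moves, one machine step is simulated by two rounds: from $I_{choice}(C)$ the first round produces $I_{dir}(C)$, where for existential $C$ the direction $dir$ is fixed by \Defender's move and for universal $C$ it is imposed by \Intruder's action $\forall_{dir}$; the second round then yields $I_{choice}(succ_{dir}(C))$, all of \Defender's moves being forced. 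I would establish this by direct inspection of $\Delta$, checking in particular that from $I_{choice}(C)$ the action $\exists$ collects both $I_L(C)$ and $I_R(C)$ into $\Delta(I_{choice}(C),\exists)$, so that intersecting with the observation actually produced selects exactly the branch chosen by \Defender.

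The crucial and most delicate step is the control of deviations: I would show that any deviating move of \Intruder instantly inserts a safe position into the information set, and that such a position can then be kept there forever, so opacity can no longer be violated. At a $choice$-round an action not matching the type of $C$ sends every position of $I_{choice}(C)$ to $\{safe_L,safe_R\}$, collapsing the information set onto a singleton $\{safe_{dir}\}$; at a $dir$-round an action $b\neq b_j$, where $b_j$ is the content of the next scanned cell $j=i+\theta_{dir}$ and is the unique non-deviating symbol, sends the $k=j$ position to $safe_{choice}$, which enters the information set beside the remaining, possibly fake, positions. The point I expect to be hardest is that the safe position survives even when the real play keeps running inside a fake configuration rather than sitting on a safe position: here I would use that the safe positions cycle $safe_{choice}\to\{safe_L,safe_R\}\to safe_{choice}$ with observations $\gamma_{choice}$ and $\gamma_L,\gamma_R$, so that, as the real play alternates a $choice$-observation with a $dir$-observation, at each round a safe successor matches the observation produced and remains in the information set. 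Since $safe_L,safe_R,safe_{choice}\notin S$, opacity then holds at every later step.

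For the direction from an accepting computation tree $T$ to a winning \Defender, I would let \Defender follow $T$: on a non-deviating history whose information set codes a configuration $C$ labelling a node $x$ of $T$, at an existential $C$ she plays the direction of the unique child of $x$, while at a universal $C$ and at every $dir$-round her moves are forced; on any deviating history she plays arbitrarily. Against an arbitrary observation-based \Intruder, either \Intruder never deviates, so the coded configurations follow a branch of $T$ (\Intruder resolving universal choices, \Defender existential ones) and reach an accepting leaf without ever meeting a rejecting configuration, whence every visited $I_{choice}(C)$ escapes $S$ and the accepting leaf is followed only by safe positions; or \Intruder deviates, and the safe thread of the previous paragraph keeps the play opaque. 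In either case the induced play is $S$-opaque, so $\beta$ is winning.

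For the converse I would recover $T$ from a winning strategy $\beta$ of \Defender. By the remark following Definition~\ref{def-guarantee}, $\beta$ is also winning against every perfect-information \Intruder, and such an \Intruder can always supply the required symbol $b_j$, hence play non-deviatingly; this is exactly what licenses reading off a genuine computation. I would then build $T$ top-down: its root is $C_{init}$; at an existential node $C$ the child is $succ_{dir}(C)$ for the direction $dir$ chosen there by $\beta$; at a universal node $C$ both $succ_L(C)$ and $succ_R(C)$ are children, obtained by letting \Intruder play $\forall_L$ and $\forall_R$. Along shared prefixes the relevant \Intruder strategies agree and the information set evolves deterministically into the next $I_{choice}(\cdot)$, so the coded configurations are consistent and $T$ is a well-defined finite tree (finiteness from the halting assumption on $\mathcal{M}$). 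Finally no branch reaches a rejecting configuration, since there $I_{choice}(C)\subseteq S$ would make the induced play non-opaque, contradicting that $\beta$ is winning; as every branch terminates, every leaf is accepting and $T$ is an accepting computation tree.
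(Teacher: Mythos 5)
Your proposal is correct and follows essentially the same route as the paper's intended argument: the paper only sketches the correctness proof (deferring details to the cited technical report), and what you write formalizes precisely that sketch --- the two-round invariant $I_{choice}(C)\rightarrow I_{dir}(C)\rightarrow I_{choice}(succ_{dir}(C))$, the observation that $I_{choice}(C)\subseteq S$ exactly when $C$ is rejecting while $I_L(C),I_R(C)$ never violate opacity, and the punishment of deviations by a safe position which, thanks to the strict alternation of $\gamma_{choice}$ with $\gamma_L,\gamma_R$ and the cycle $safe_{choice}\rightarrow\{safe_L,safe_R\}\rightarrow safe_{choice}$, persists in every later information set. Your two directions (accepting tree $\Rightarrow$ winning strategy following the tree, and winning strategy $\Rightarrow$ tree extracted via non-deviating \Intruder play, justified by the remark that quantifying over perfect-information strategies is equivalent --- or equally by the fact that information sets, hence the current configuration and the required letter $b_j$, are determined by the observation history) are exactly the intended correctness proof.
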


\section{Blindfold games with opacity condition}
\label{sec-blindfold}

We recall that a game with imperfect information is \emph{blindfold} if all positions have the same observation.

\begin{lemma}
  \label{blind}
  Let $A=(V,\Delta,\mathrm{\obs},\mathrm{\act},v_0)$ be a blindfold
  game with imperfect information over $\Sigma$ and
  $\Gamma=\{\gamma\}$. For every play prefix
  $\rho^n=v_0a_1v_1\ldots a_n v_n$,
  $I(\rho^n)=\Delta(\{v_0\},a_1\ldots a_n).$
\end{lemma}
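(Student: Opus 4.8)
The plan is to proceed by a straightforward induction on the length $n$ of the play prefix, exploiting the fact that in a blindfold arena the observation reveals nothing: since $\Gamma=\{\gamma\}$ we have $\mathrm{\obs}^{-1}(\gamma)=V$, so for every position $v$ the set of positions denoted by $\mathrm{\obs}(v)$ is all of $V$. The entire content of the lemma is that this makes the intersection appearing in Definition~\ref{defIS} vacuous, collapsing the inductive definition of information sets to the iterated image under $\Delta$. First I would fix the notation $\Delta(X,w)$ for the natural extension of the transition function to a set of positions $X\subseteq V$ and a word $w=a_1\ldots a_n\in\Sigma^*$, namely $\Delta(X,\varepsilon):=X$ and $\Delta(X,wa):=\bigcup_{v\in\Delta(X,w)}\Delta(v,a)$; this is precisely the reading intended on the right-hand side of the statement.

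For the base case $n=0$, the prefix is $\rho^0=v_0$ and Definition~\ref{defIS} gives $I(\rho^0)=\{v_0\}=\Delta(\{v_0\},\varepsilon)$, as required. For the inductive step, I would assume $I(\rho^n)=\Delta(\{v_0\},a_1\ldots a_n)$ and consider $\rho^{n+1}=\rho^n a_{n+1}v_{n+1}$. By Definition~\ref{defIS}, $I(\rho^{n+1})=\Delta(I(\rho^n),a_{n+1})\cap \mathrm{\obs}(v_{n+1})$. Here is the single point where blindness is used: $\mathrm{\obs}(v_{n+1})$, read as the set of positions it denotes, equals $V$, and since $\Delta(I(\rho^n),a_{n+1})\subseteq V$ in any case, the intersection acts as the identity. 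Hence $I(\rho^{n+1})=\Delta(I(\rho^n),a_{n+1})$, and substituting the induction hypothesis together with the recursive clause for $\Delta(\cdot,\cdot)$ yields $\Delta(\Delta(\{v_0\},a_1\ldots a_n),a_{n+1})=\Delta(\{v_0\},a_1\ldots a_{n+1})$, which closes the induction.

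There is no genuine obstacle here: the lemma is essentially an unfolding of the definitions, and the only step worth stating explicitly is the identification of the observation $\mathrm{\obs}(v_{n+1})$ with the set $\mathrm{\obs}^{-1}(\gamma)=V$, which is what justifies dropping the intersection. The closest thing to a delicate point is purely bookkeeping, namely checking that the word-extension of $\Delta$ used on the right-hand side coincides with the one-step set-extension iterated in Definition~\ref{defIS}; this is immediate from the two recursive clauses above and requires no further argument.
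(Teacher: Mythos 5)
Your proof is correct and matches the paper's approach: the paper dismisses this lemma with the single remark that it is ``trivial, by applying the definition of the information set,'' and your induction is precisely that application spelled out, with the key observation that $\mathrm{\obs}(v_{n+1})$ denotes all of $V$ in a blindfold game so the intersection in Definition~\ref{defIS} is vacuous. Nothing further is needed.
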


The proof is trivial, by applying the definition
of the information set.

In blindfold games \Intruder cannot base the choice of his actions on anything
because he sees nothing of what \Defender does. So a strategy for 
\Intruder is just an infinite sequence of actions. More formally:

\begin{lemma}
\label{blind-obs}
  Let $A=(V,\Delta,\mathrm{\obs},\mathrm{\act},v_0)$ be a blindfold
  game with imperfect information over $\Sigma$ and
  $\Gamma=\{\gamma\}$, let $\alpha$ be a strategy for \Intruder, then
  there exists $a_1a_2a_3\ldots\in\Sigma^\omega$ such that for all
   strategies $\beta$ and $\beta'$ for \Defender,
  $\mathrm{\obs}(\alpha\widehat{~}\beta)=\mathrm{\obs}(\alpha\widehat{~}\beta')=v_0a_1\gamma a_2\gamma\ldots$
\end{lemma}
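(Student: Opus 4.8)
The plan is to exploit the defining feature of a blindfold game: every position carries the same observation $\gamma$, so the observation of any play prefix records only the actions played by \Intruder together with the fixed symbols $v_0$ and $\gamma$, and is therefore completely insensitive to \Defender's choice of positions. Since \Intruder's strategy $\alpha$ reads exactly such observation sequences, the action it prescribes at each round will be determined without any reference to $\beta$.

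First I would define the candidate sequence $a_1a_2a_3\ldots$ by induction on the round, purely from $\alpha$ and with no mention of any play or of \Defender: set $a_1:=\alpha(v_0)$ (recall $\rho^0=v_0$, hence $\mathrm{\obs}(\rho^0)=v_0$), and once $a_1,\ldots,a_k$ are fixed, set $a_{k+1}:=\alpha(v_0a_1\gamma a_2\gamma\ldots a_k\gamma)$. This is well defined precisely because $\Gamma=\{\gamma\}$ forces every argument of $\alpha$ to be of this shape.

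Next, for an arbitrary strategy $\beta$ of \Defender, write $\rho=\alpha\widehat{~}\beta=v_0a_1'v_1a_2'v_2\ldots$ for the induced play, and prove by induction on $k$ that $a_k'=a_k$ for every $k\ge 1$. Because $\mathrm{\obs}(v_i)=\gamma$ for all $i$ in a blindfold game, we have $\mathrm{\obs}(\rho^{k-1})=v_0a_1'\gamma a_2'\gamma\ldots a_{k-1}'\gamma$; by the induction hypothesis this equals $v_0a_1\gamma\ldots a_{k-1}\gamma$. Since $\rho$ is induced by $\alpha$, it follows that $a_k'=\alpha(\mathrm{\obs}(\rho^{k-1}))=\alpha(v_0a_1\gamma\ldots a_{k-1}\gamma)=a_k$ by definition of $a_k$. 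The base case is immediate, $a_1'=\alpha(v_0)=a_1$.

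Finally I would conclude: the positions $v_i$ of $\rho$ may well depend on $\beta$, but each of them has observation $\gamma$, so $\mathrm{\obs}(\alpha\widehat{~}\beta)=v_0a_1'\gamma a_2'\gamma\ldots=v_0a_1\gamma a_2\gamma\ldots$, an expression that does not involve $\beta$. Applying this to both $\beta$ and $\beta'$ yields the desired equality. I expect no genuine obstacle here beyond keeping the definitions straight; the only point that deserves care is that the induction must thread through observation sequences rather than through plays, which is exactly what decouples \Intruder's behaviour from \Defender's moves.
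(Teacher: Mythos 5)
Your proof is correct, and it is exactly the intended argument: the paper states this lemma without an explicit proof, treating it as the immediate formalization of the remark that in a blindfold game \Intruder{} sees nothing of \Defender's moves, so his strategy collapses to a fixed action sequence. Your inductive construction of $a_1a_2\ldots$ from $\alpha$ alone, followed by the induction showing every play induced by $\alpha$ uses precisely these actions, is the straightforward way to make that remark rigorous, and it contains no gaps.
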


In the rest of this section we prove the following two theorems:

\begin{theorem}
\label{blind-determined}
Blindfold games with opacity condition are determined.
\end{theorem}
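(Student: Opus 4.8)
The plan is to exploit the fact that in a blindfold game the opacity condition is completely insensitive to \Defender's moves, so that the game collapses to a one-player question about \Intruder's action words. Concretely, by Lemma~\ref{blind} the information set after $n$ rounds is $I(\rho^n)=\Delta(\{v_0\},a_1\ldots a_n)$, which is a function of the action word $a_1\ldots a_n$ alone and does not depend on the positions $v_1,\ldots,v_n$ chosen by \Defender. Hence whether a play $\rho$ is $S$-opaque, i.e.\ whether $I(\rho^k)\nsubseteq S$ for all $k$, is entirely determined by the infinite action sequence $a_1a_2\ldots$ produced by \Intruder. By Lemma~\ref{blind-obs}, such a sequence is exactly what a strategy of \Intruder amounts to in the blindfold setting.

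First I would record the induced dichotomy on action words. Call a word $w=a_1a_2\ldots\in\mathrm{act}(\gamma)^\omega$ \emph{opaque} if $\Delta(\{v_0\},a_1\ldots a_k)\nsubseteq S$ for every $k$. Then for any strategy $\alpha$ of \Intruder realising $w$ and any two strategies $\beta,\beta'$ of \Defender, the plays $\alpha\widehat{~}\beta$ and $\alpha\widehat{~}\beta'$ share the same action word $w$ and therefore, by the two lemmas above, the same sequence of information sets and hence the same opacity status, which is precisely that of $w$. Next I would split into the two exhaustive cases. If some action word $w$ is non-opaque, \Intruder fixes a strategy realising $w$; by the previous remark the unique play $\alpha\widehat{~}\beta$ violates opacity for \emph{every} strategy $\beta$ of \Defender, so $\exists\alpha\,\forall\beta$ the play is not $S$-opaque and \Intruder wins. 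Otherwise every action word is opaque; then for an arbitrary strategy $\beta$ of \Defender and every strategy $\alpha$ of \Intruder, the play $\alpha\widehat{~}\beta$ is $S$-opaque, so $\exists\beta\,\forall\alpha$ the play is $S$-opaque and \Defender wins. Since these cases are mutually exclusive and jointly exhaustive, one of the players wins, which is determinacy; the same reasoning applies with $v_0$ replaced by any position $v$ and initial information set $\{v\}$, giving the property at every position.

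There is no deep obstacle here: the entire content is the observation, furnished by Lemmas~\ref{blind} and~\ref{blind-obs}, that \Defender's choices are irrelevant to the winning condition. The only points deserving care are, first, to check that an arbitrary word $w\in\mathrm{act}(\gamma)^\omega$ does correspond to a legitimate strategy of \Intruder --- which holds because all positions share the single observation $\gamma$, so the available action set is always $\mathrm{act}(\gamma)$, and one may simply let \Intruder output $a_{k+1}$ after any history of length $k$ --- and, second, to confirm that the two cases genuinely exhaust all possibilities, which is just the excluded middle applied to the existence of a non-opaque word. I would note in passing that this argument also yields the announced link with the opacity-verify problem: ``every strategy of \Defender is winning'' holds exactly when every action word is opaque, i.e.\ exactly in the second case above, which is the complement of the opacity-violate condition.
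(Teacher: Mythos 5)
Your proof is correct, and it rests on the same two lemmas as the paper (Lemma~\ref{blind} and Lemma~\ref{blind-obs}) and on the same core insight --- in a blindfold game \Defender's moves are irrelevant to the opacity status of a play --- but it is organized differently. The paper never argues determinacy directly: it first proves Theorem~\ref{eq-g-v}, the equivalence in blindfold games of the opacity-verify problem, the opacity-guarantee problem and the complement of the opacity-violate problem (a quantifier-exchange argument over strategies, using Lemma~\ref{blind-obs} to transfer opacity from $\alpha\widehat{~}\beta$ to $\alpha\widehat{~}\beta'$), and then obtains determinacy as an immediate corollary: if \Intruder has no winning strategy then $\forall \alpha\, \exists \beta$ the play is $S$-opaque, hence by the equivalence $\exists \beta\, \forall \alpha$ the play is $S$-opaque, i.e.\ \Defender wins. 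You instead reify \Intruder's strategies as action words, classify each word as opaque or not via the explicit formula $I(\rho^k)=\Delta(\{v_0\},a_1\ldots a_k)$ of Lemma~\ref{blind}, and conclude by excluded middle on the existence of a non-opaque word; this makes the reduction to a one-player game explicit and formal, whereas the paper only describes that reduction informally after Theorem~\ref{eq-g-v}. Your closing remark in effect re-derives the content of Theorem~\ref{eq-g-v} from your dichotomy, so the two routes are interderivable; what the paper's ordering buys is that the equivalence theorem is the statement actually reused elsewhere (it combines with the $\PSPACE$-completeness of opacity-verify, Theorem~\ref{complexity-verify}, to yield Theorem~\ref{PSPACEc-OGOV}), while your ordering gives a shorter, self-contained proof of determinacy alone. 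The two points of care you flag --- that every word of $\mathrm{act}(\gamma)^\omega$ is realizable by a legitimate strategy of \Intruder, and that the two cases are exhaustive --- are exactly the right ones, and both check out since $\Delta$ is nonempty-valued and all positions share the single observation $\gamma$.
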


\begin{theorem}
\label{PSPACEc-OGOV}
For blindfold games with opacity condition, the opacity-guarantee problem and the
opacity-violate problem are $\PSPACE$-complete.
\end{theorem}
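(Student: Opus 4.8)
The plan is to exploit the collapse of \Defender's role in the blindfold setting. By Lemmas~\ref{blind} and \ref{blind-obs}, the information set after an action prefix $a_1\ldots a_k$ equals $\Delta(\{v_0\},a_1\ldots a_k)$, independently of the positions \Defender visits; consequently the opacity of a play $\alpha\widehat{~}\beta$ depends only on \Intruder's action sequence and not at all on $\beta$. Thus a play with actions $a_1a_2\ldots$ is $S$-opaque iff $\Delta(\{v_0\},a_1\ldots a_k)\nsubseteq S$ for every $k$. Since a suitable strategy $\beta$ always exists and plays no role, the opacity-guarantee condition $\exists\beta\,\forall\alpha$ becomes ``for every finite action word $w$, $\Delta(\{v_0\},w)\nsubseteq S$'', while the opacity-violate condition $\exists\alpha\,\forall\beta$ becomes ``there is a finite action word $w$ with $\Delta(\{v_0\},w)\subseteq S$''. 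In particular the two problems are complementary, in accordance with Theorem~\ref{blind-determined}.

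Both problems are therefore reachability questions in the deterministic \emph{powerset} automaton whose states are the subsets of $V$, with initial state $\{v_0\}$ and transition $I\mapsto\Delta(I,a)$ on action $a$: opacity-violate asks whether some subset included in $S$ is reachable from $\{v_0\}$, and opacity-guarantee asks whether none is. For the $\PSPACE$ upper bound I would solve this reachability on the fly: a subset of $V$ fits in $|V|$ bits, so one guesses the word one letter at a time, keeping only the current subset $I:=\Delta(I,a)$ and accepting as soon as $I\subseteq S$; a binary counter bounding the search by $2^{|V|}$ steps ensures termination in polynomial space. This places opacity-violate in nondeterministic polynomial space, hence in $\PSPACE$ by Savitch's theorem, and since opacity-guarantee is its complement it lies in $\PSPACE$ as well.

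For the matching lower bounds I would reduce from the universality problem for nondeterministic finite automata, which is $\PSPACE$-complete~\cite{hopcroft2006automata}. Given an NFA $\mathcal A=(Q,\Sigma,\delta,q_0,F)$, I first complete it by adding a non-accepting sink that absorbs all missing transitions, making $\delta$ total without altering $L(\mathcal A)$; I then read $\mathcal A$ as a blindfold game with a single observation, positions $V=Q$, transition $\Delta=\delta$, initial position $v_0=q_0$ and secret set $S=Q\setminus F$. Because $\Delta(\{v_0\},w)$ is exactly the set of states reachable in $\mathcal A$ on $w$, a word $w$ is rejected by $\mathcal A$ precisely when $\Delta(\{v_0\},w)\subseteq S$. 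Hence opacity-violate holds in this game iff $\mathcal A$ is non-universal, and opacity-guarantee holds iff $\mathcal A$ is universal; as both universality and its complement are $\PSPACE$-complete, both opacity problems are $\PSPACE$-hard, and together with the upper bound this yields $\PSPACE$-completeness.

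The step deserving the most care is the first one, where the quantifier alternation $\exists\beta\,\forall\alpha$ (resp.\ $\exists\alpha\,\forall\beta$) is reduced to a single reachability question: this is exactly where the blindfold hypothesis, through Lemmas~\ref{blind} and \ref{blind-obs}, makes opacity a property of \Intruder's action word alone and thereby removes the game-theoretic alternation. Once this reduction is justified, the upper bound is a routine on-the-fly reachability search and the lower bound a routine automata-completion argument, so I expect no further difficulty.
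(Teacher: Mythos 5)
Your proposal is correct and takes essentially the same route as the paper: the paper likewise uses Lemma~\ref{blind-obs} (and Lemma~\ref{blind}) to collapse both problems to the single question of whether some reachable set $\Delta(\{v_0\},w)$ is contained in $S$ (phrased there as the intermediate \emph{opacity-verify} problem of Theorem~\ref{eq-g-v}), obtains the upper bound by the same nondeterministic polynomial-space search that tracks the current information set, and obtains hardness by the same reduction from NFA universality. The only cosmetic differences are organizational: the paper names the intermediate problem explicitly and, in its reduction, assumes the NFA complete and adds a fresh initial position to encode a set $Q_0$ of initial states, whereas you complete the automaton with a non-accepting sink and start directly from its unique initial state---both handle the empty-word/initial-information-set case correctly.
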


Both theorems are proved by considering a third problem: the
\emph{opacity-verify problem} which addresses the strong ability for
\Defender to win the game. We define this problem and establish its PSPACE-completeness in the
general setting of games with opacity condition and also in the
particular case of blindfold games (Theorem~\ref{complexity-verify}). We finally compare it
to the opacity-violate and opacity-guarantee problems for blindfold
games (Theorem~\ref{eq-g-v}).

\begin{definition}
Given a game with opacity condition $A
  =(V,\Delta,\mathrm{\obs},\mathrm{\act},v_0,S)$, the \emph{opacity-verify problem} is to decide whether the following property holds:
\begin{equation}
\label{eq-Ove}
\forall \beta, \forall \alpha,\;\alpha\widehat{~}\beta \mbox{ is }S\mbox{-opaque}
\end{equation}
\end{definition}
If Property~\eqref{eq-Ove} holds, any strategy $\beta$ of 
\Defender is a winning-strategy. Otherwise, there exists a play in the game that is not $S$-opaque.

\begin{theorem}
  \label{complexity-verify}
  The opacity-verify problem is $\PSPACE$-complete, even for blindfold games.
\end{theorem}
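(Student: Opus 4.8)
The plan is to prove membership in $\PSPACE$ for arbitrary games with opacity condition, and $\PSPACE$-hardness already for the restricted class of blindfold games, so that the two bounds together yield the claim in full generality and for the blindfold case at once.

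For the upper bound I would work with the complement problem, namely the existence of a play that is \emph{not} $S$-opaque; since $\PSPACE$ is closed under complement and $\mathrm{NPSPACE}=\PSPACE$ by Savitch's theorem, it suffices to place this complement in $\mathrm{NPSPACE}$. The key observation is that along any play $\rho$ the information set $I(\rho^k)$ depends only on the sequence of actions and observations, and that every information set lies within a single observation, so that $\mathrm{act}(I)$ is well defined. I would first argue that a set $I' = \Delta(I,a) \cap \gamma$ is reachable by an actual play exactly when it is non-empty: for any $v' \in I'$ one has $v' \in \Delta(u,a)$ for some $u$ in the reachable information set $I$, and since $u$ is itself reached by some play consistent with the observations, that play extends through $v'$ and realizes $I'$. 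Consequently a non-$S$-opaque play exists if and only if, in the graph whose vertices are the reachable information sets and whose edges are the moves $I \mapsto \Delta(I,a)\cap\gamma \neq \emptyset$, some vertex $I \subseteq S$ is reachable from $\{v_0\}$. A non-deterministic machine explores this graph on the fly, storing only the current information set (a subset of $V$, hence polynomial space) and a step counter bounded by $2^{|V|}$, which cuts off the search once all of the at most $2^{|V|}$ information sets could have been visited. This is an $\mathrm{NPSPACE}$ procedure, hence the bound. In the blindfold case it specializes further: by Lemma~\ref{blind} the reachable information sets are exactly the sets $\Delta(\{v_0\},w)$ for $w \in \Sigma^*$, and the search becomes the standard subset construction.

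For the lower bound I would reduce from the universality problem for non-deterministic finite automata, which is $\PSPACE$-complete and underlies the analogous reduction of \cite{de2006antichains}. Given an NFA $\mathcal{A}=(Q,\Sigma,\delta,q_0,F)$, first complete it in polynomial time by adding a non-accepting sink that absorbs all missing transitions, so that $\delta$ becomes a total function into non-empty sets while $L(\mathcal{A})$ is preserved. I then define a blindfold game with opacity condition on $V = Q$, with the single observation $\gamma$ and $\mathrm{act}(\gamma) = \Sigma$, transition function $\Delta = \delta$, initial position $v_0 = q_0$, and secret set $S = Q \setminus F$. By Lemma~\ref{blind}, for every word $w = a_1 \cdots a_k$ the information set reached by a corresponding play is $\Delta(\{q_0\}, w)$; this set is contained in $S = Q\setminus F$ exactly when $\Delta(\{q_0\},w) \cap F = \emptyset$, i.e. exactly when $w \notin L(\mathcal{A})$. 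Hence all plays are $S$-opaque if and only if every word is accepted, that is, if and only if $\mathcal{A}$ is universal. Since blindfold games are a special case, this gives $\PSPACE$-hardness both in general and for blindfold games.

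The routine parts are the Savitch/complementation bookkeeping in the upper bound and the polynomial completion of the automaton in the lower bound. The step I expect to be the main obstacle is the realizability argument in the upper bound: justifying that tracking information sets alone, rather than the actual positions \Defender{} chooses, faithfully captures the existence of a non-$S$-opaque play. The point is that any non-empty candidate information set can be realized by routing an actual play through one of its elements, which is precisely what makes the subset-style reachability search both sound and complete.
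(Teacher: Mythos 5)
Your proposal is correct and follows essentially the same route as the paper: $\PSPACE$ membership via a nondeterministic polynomial-space search for a non-$S$-opaque play (invoking Savitch's theorem and closure of $\PSPACE$ under complement), and $\PSPACE$-hardness by reducing NFA universality to the opacity-verify problem on a blindfold game whose secret set is the complement of the accepting states. The differences are only cosmetic: the paper's algorithm guesses an actual play (moves of both players) and updates the information set alongside it, which sidesteps your realizability lemma (that lemma is nevertheless correct, being the standard characterization of reachable information sets), and the paper's reduction starts from NFAs with a set $Q_0$ of initial states and therefore prepends a fresh initial position to load $Q_0$ into the information set, whereas your single-initial-state variant with a completing non-accepting sink achieves the same effect directly.
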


For the $\PSPACE$ membership, we design an algorithm that decides
whether there exists a losing play for \Defender, which is clearly
equivalent to deciding whether there exists a strategy of \Defender that is
not winning. 
The algorithm runs in $\NPSPACE$, hence in $\PSPACE$ \cite{Savitch70}, by nondeterministically choosing the
moves for \Intruder and \Defender, and by updating the current
information set of \Intruder at each round. Since information sets are
subsets of the set of positions, if there are $n$ positions, we need
$O(n)$ space to run this algorithm.
The $\PSPACE$-hardness of the opacity-verify problem results from a
reduction from the universality problem for a complete
nondeterministic finite automaton (NFA), known to be
$\PSPACE$-complete \cite{stockmeyer1973word}. 
This reduction was initially inspired by \cite{dubreilphd} but is in fact a variant of the one in \cite{de2006antichains}.

We recall that a NFA $\mathcal A=(Q,\Sigma,\Delta,Q_0,Q_f)$ is a
nondeterministic finite automaton with states $Q$, alphabet
$\Sigma$, transition relation $\Delta:Q\times\Sigma\rightarrow 2^Q$ and sets of (respectively)
initial and accepting states $Q_0$ and $Q_f$. A NFA $\mathcal A$ is complete if
for every state $q$ and letter $a$, $\Delta(q,a)\neq\emptyset$. The \emph{language}
$\mathcal L(\mathcal A)\subseteq \Sigma^*$ of $\mathcal A$ is the set of words $w\in \Sigma^*$ such that
$\Delta(Q_0,w)\cap Q_f\neq\emptyset$.
 The universality problem is
to decide whether $\mathcal A$ accepts all possible finite words, \emph{i.e} $\mathcal
L(\mathcal A)=\Sigma^*$.

Given a complete NFA $\mathcal A=(Q,\Sigma,\Delta,Q_0,Q_f)$, 
define the blindfold game with opacity condition $A_{\mathcal
  A}=(Q\cup\{q_0\},\Delta',\mathrm{\obs},\mathrm{\act},q_0,S)$ over $\Sigma$ and
$\Gamma=\{\gamma\}$, with $q_0\notin Q$, as follows:

$$\begin{array}{lcr}
S=Q\backslash (Q_f\cup\{q_0\})\hspace{0.5cm} & \mathrm{\act}(\gamma)=\Sigma \hspace{0.5cm} &
\forall q \in Q\cup\{q_0\}, \mathrm{\obs}(q)=\gamma 
\end{array}$$
$$ \forall a\in \Sigma,\Delta'(q,a) = 
\begin{cases}
  Q_0 & \mbox{if } q = q_0 \\
  \Delta(q,a) & \mbox{otherwise}
\end{cases}$$


\pagebreak

Since, firstly, $q_0$ is not reachable after the first move, secondly,
$\Delta'(q,a)=\Delta(q,a)$ for $q\neq q_0$ and finally,
$\Delta'(q_0,a)=Q_0$ for all $a$, we obtain from lemma \ref{blind} the
following corollary :

\begin{corollary}
  \label{blindlemma}
  For each play prefix in $A_{\mathcal A}$ of the form $\rho^n=q_0a_1\ldots a_nq_n$ ($n\geq 1$),
  $I(\rho^n)=\Delta(Q_0,a_2\ldots a_n)$.
\end{corollary}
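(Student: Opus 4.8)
The plan is to reduce everything to a direct application of Lemma~\ref{blind}, since $A_{\mathcal A}$ is by construction a blindfold game (all positions share the single observation $\gamma$), and then to simplify the resulting expression using the particular shape of $\Delta'$. First I would instantiate Lemma~\ref{blind} with the game $A_{\mathcal A}$, whose initial position is $q_0$. For a play prefix $\rho^n = q_0 a_1 \ldots a_n q_n$ the lemma yields $I(\rho^n) = \Delta'(\{q_0\}, a_1\ldots a_n)$, where $\Delta'$ is understood as the usual extension of the transition function to words, $\Delta'(X, a_1\ldots a_n) = \Delta'(\ldots\Delta'(\Delta'(X,a_1),a_2)\ldots,a_n)$, with $\Delta'(X,a)=\bigcup_{q\in X}\Delta'(q,a)$.

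Next I would peel off the first letter. Because $\Delta'(q_0,a)=Q_0$ for every action $a$, we have $\Delta'(\{q_0\},a_1)=Q_0$, and therefore $I(\rho^n)=\Delta'(Q_0, a_2\ldots a_n)$. Finally I would replace $\Delta'$ by $\Delta$ on the remaining suffix. Since $q_0\notin Q$ and $Q_0\subseteq Q$, no state of $Q_0$ is $q_0$, and by definition $\Delta'(q,a)=\Delta(q,a)$ for all $q\in Q$; moreover $q_0$ can never be re-entered, as every transition out of a state of $Q$ lands in $Q$, so each set reached from $Q_0$ stays inside $Q$. A straightforward induction on the length of $a_2\ldots a_n$ then gives $\Delta'(Q_0, a_2\ldots a_n)=\Delta(Q_0, a_2\ldots a_n)$, which is exactly the claim. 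The boundary case $n=1$ is covered as well: the suffix $a_2\ldots a_n$ is then empty and both sides collapse to $Q_0$, matching $I(\rho^1)=\Delta'(\{q_0\},a_1)=Q_0$.

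I do not expect any real obstacle here; the statement is essentially a computation. The only point requiring a word of care—and it is precisely the content of the sentence preceding the corollary—is the bookkeeping that $q_0$ becomes unreachable after the first round, which is what licenses substituting $\Delta$ for $\Delta'$ uniformly while computing the suffix. Everything else is a mechanical unfolding of the word-extended transition function.
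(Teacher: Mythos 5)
Your proof is correct and follows exactly the paper's own argument: the paper derives the corollary from Lemma~\ref{blind} together with the three facts it lists just before the statement ($q_0$ unreachable after the first move, $\Delta'(q,a)=\Delta(q,a)$ for $q\neq q_0$, and $\Delta'(q_0,a)=Q_0$ for all $a$). You merely spell out the bookkeeping (peeling off $a_1$, the induction replacing $\Delta'$ by $\Delta$, and the $n=1$ boundary case) that the paper leaves implicit.
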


One may note that the aim of the initial position $q_0$ is to
initialise \Intruder 's information set to $Q_0$ at the end of the
first round. 

\begin{proposition}
The NFA $\mathcal A$ is universal if, and only if, in  $A_{\mathcal A}$, every strategy of \Defender is winning.
\end{proposition}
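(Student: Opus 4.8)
The plan is to exploit the blindfold structure so that the opacity-verify condition depends only on the infinite action sequence chosen by \Intruder, and then to match finite action sequences with words read by $\mathcal A$. First I would note that by Corollary~\ref{blindlemma} the information sets along a play are determined entirely by the actions $a_1a_2\ldots\in\Sigma^\omega$, and by Lemma~\ref{blind-obs} the action sequence realised by a strategy $\alpha$ does not depend on \Defender's strategy $\beta$; hence whether $\alpha\widehat{~}\beta$ is $S$-opaque is independent of $\beta$. Moreover, since every letter of $\Sigma$ is an available action at the unique observation $\gamma$, any sequence in $\Sigma^\omega$ is realised by some strategy $\alpha$ (define $\alpha$ to output $a_{k+1}$ on a prefix of length $k$). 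Consequently the property ``$\forall\beta,\forall\alpha,\ \alpha\widehat{~}\beta$ is $S$-opaque'' is equivalent to ``for every action sequence $a_1a_2\ldots\in\Sigma^\omega$, the induced play is $S$-opaque''.

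Next I would rewrite the opacity condition in terms of $\mathcal A$. The information set $I(\rho^0)=\{v_0\}=\{q_0\}$ is never contained in $S$, since $q_0\notin Q$ and $S\subseteq Q$. For $n\ge1$, Corollary~\ref{blindlemma} gives $I(\rho^n)=\Delta(Q_0,a_2\ldots a_n)\subseteq Q$, with the convention $a_2\ldots a_n=\varepsilon$ when $n=1$ (so $I(\rho^1)=Q_0$). Because $S=Q\setminus(Q_f\cup\{q_0\})$ and $I(\rho^n)\subseteq Q$, the containment $I(\rho^n)\subseteq S$ holds if and only if $I(\rho^n)\cap Q_f=\emptyset$, that is, if and only if $\Delta(Q_0,a_2\ldots a_n)\cap Q_f=\emptyset$, which by definition of $\mathcal L(\mathcal A)$ means $a_2\ldots a_n\notin\mathcal L(\mathcal A)$. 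Thus the induced play fails to be $S$-opaque exactly when some prefix $a_2\ldots a_n$ (including the empty one) is rejected by $\mathcal A$.

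It then remains to establish the two implications. For ``$\mathcal A$ universal $\Rightarrow$ every strategy winning'', assume $\mathcal L(\mathcal A)=\Sigma^*$; then for any action sequence and any $n\ge1$ the word $a_2\ldots a_n$ is accepted, so $I(\rho^n)\cap Q_f\neq\emptyset$ and $I(\rho^n)\not\subseteq S$, which together with the case $n=0$ shows every play is $S$-opaque. For the converse I would argue contrapositively: if $\mathcal A$ is not universal, choose a rejected word $u=b_1\ldots b_m\notin\mathcal L(\mathcal A)$ and take any action sequence with $a_2=b_1,\ldots,a_{m+1}=b_m$; by the computation above $I(\rho^{m+1})=\Delta(Q_0,u)\subseteq S$, so the induced play is not $S$-opaque, witnessing a strategy of \Defender that is not winning.

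The two implications are routine once the translation is in place. The point I would treat as the main obstacle is the first step: correctly collapsing the double quantification $\forall\beta\,\forall\alpha$ to a quantification over bare action sequences (via Lemma~\ref{blind-obs} and Corollary~\ref{blindlemma}), while tracking the one-round shift introduced by the auxiliary initial position $q_0$, so that after $n$ rounds $\mathcal A$ has read $a_2\ldots a_n$ rather than $a_1\ldots a_n$, and handling the boundary cases $n=0$ and $n=1$ where the information set is $\{q_0\}$ and $Q_0$ respectively.
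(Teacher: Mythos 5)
Your proof is correct and follows essentially the same route as the paper's: both rest on Corollary~\ref{blindlemma} to identify information sets with $\Delta(Q_0,w)$, on the definition of $S$ to turn $I(\rho^n)\subseteq S$ into non-acceptance of the corresponding word, and on completeness of $\mathcal A$ to realise any action sequence as a play. Your contrapositive phrasing of the ``every strategy winning $\Rightarrow$ universal'' direction and the preliminary collapse of the $\forall\beta\,\forall\alpha$ quantification to bare action sequences are only cosmetic reorganisations of the paper's argument, not a different method.
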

\begin{proof}
We start with the right-left implication. Assume that every strategy is winning for \Defender. 
Take one strategy $\beta$, and take a word $w\in\Sigma^*$. Consider a
play $\rho$ in which \Intruder's first moves form the sequence of
actions $aw$, for some $a$ in $\Sigma$, and \Defender follows strategy
$\beta$. This is possible because the underlying automaton is
complete. Being $\rho$ induced by the winning strategy $\beta$, it is
$S$-opaque, hence in particular $I(\rho^{1+|w|})\nsubseteq S$.  By
Corollary~\ref{blindlemma} we obtain : $\Delta(Q_0,w)\nsubseteq S$,
which implies that there exists a position $q$ in $\Delta(Q_0,w)$
that is in $Q_f$, hence $\mathcal A$ accepts $w$. $\mathcal A$ is
universal.

For the other implication, suppose that $\mathcal A$ is universal. Let $\beta$ be
a strategy of \Defender, and let $\rho$ be a play induced by $\beta$. We prove that
$\rho$ is $S$-opaque. Let $n \in \mathbb N$.  If $n=0$,
$I(\rho^n)=\{q_0\}\nsubseteq S$. If $n>0$, there exists $w$ in
$\Sigma^*$ such that $I(\rho^n)=\Delta(Q_0,w)$
(Corollary~\ref{blindlemma}).  Since $\mathcal A$ is universal it accepts
$w$, hence $\Delta(Q_0,w)\cap Q_f\neq \emptyset$. So $I(\rho^n)\nsubseteq
S$, and this finishes the proof.
\end{proof}

\begin{theorem}
  \label{eq-g-v}
  In the setting of blindfold games with opacity condition, the opacity-verify
  problem, the opacity-guarantee problem and the complementary of the opacity-violate problem are equivalent.
\end{theorem}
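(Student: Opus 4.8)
The plan is to reduce all three statements to a single condition depending on \Intruder's strategy alone, exploiting that in a blindfold game \Defender's moves are invisible and therefore irrelevant to opacity.

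First I would isolate an auxiliary claim: in a blindfold game, for every strategy $\alpha$ of \Intruder and all strategies $\beta, \beta'$ of \Defender, the play $\alpha\widehat{~}\beta$ is $S$-opaque if and only if $\alpha\widehat{~}\beta'$ is $S$-opaque. This should follow immediately from the two preceding lemmas: Lemma~\ref{blind-obs} provides a single action sequence $a_1 a_2\ldots$, fixed by $\alpha$ alone, that is shared by $\alpha\widehat{~}\beta$ and $\alpha\widehat{~}\beta'$, and Lemma~\ref{blind} then gives $I(\rho^n)=\Delta(\{v_0\}, a_1\ldots a_n)$ for both plays; since $S$-opacity is defined solely through the information sets, the two plays satisfy it simultaneously.

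I would then close the three-way equivalence by a cycle of implications. The implication from opacity-verify ($\forall\beta\,\forall\alpha$) to opacity-guarantee ($\exists\beta\,\forall\alpha$), and from opacity-guarantee to the complementary of opacity-violate ($\forall\alpha\,\exists\beta$), are pure quantifier logic: they use only that the set of \Defender's strategies is nonempty and the tautology $\exists\beta\,\forall\alpha\Rightarrow\forall\alpha\,\exists\beta$, so they require no blindfold assumption. For the remaining implication, from $\forall\alpha\,\exists\beta$ back to $\forall\beta\,\forall\alpha$, I would fix an arbitrary $\alpha$ and an arbitrary $\beta$; the hypothesis supplies some $\beta'$ with $\alpha\widehat{~}\beta'$ being $S$-opaque, and the auxiliary claim then transfers opacity from $\beta'$ to $\beta$, yielding the universally quantified opacity-verify property.

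The only step genuinely requiring the blindfold hypothesis is this last implication, via the auxiliary claim; everywhere else the argument is quantifier bookkeeping. I therefore expect the sole obstacle to be a careful formulation of the auxiliary claim, making explicit that the sequence of information sets of $\alpha\widehat{~}\beta$ is a function of $\alpha$ only.
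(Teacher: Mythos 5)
Your proposal is correct and follows essentially the same route as the paper: both hinge on Lemma~\ref{blind-obs} (together with Lemma~\ref{blind}) to show that in a blindfold game the information sets, and hence opacity, of $\alpha\widehat{~}\beta$ depend only on $\alpha$, and the remaining work is quantifier bookkeeping. The only difference is organizational: the paper proves the equivalences verify/guarantee and verify/complement-of-violate separately, invoking the blindfold transfer argument once in each, whereas you arrange a single cycle of implications so that the blindfold hypothesis is used exactly once --- a slightly tidier packaging of the same argument.
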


\begin{proof}

Let $A=(V,\Delta,\mathrm{\obs},\mathrm{\act},v_0,S)$ be a
  blindfold game with opacity condition.
  It is clear that in general, $$\forall \beta,\forall
  \alpha,\;\alpha\widehat{~}\beta \mbox{ is }S\mbox{-opaque}\Rightarrow
  \exists \beta,\forall \alpha,\;\alpha\widehat{~}\beta \mbox{ is
  }S\mbox{-opaque}$$ We prove the converse in the case of blindfold
  games. Suppose that there exists a winning strategy $\beta$ for \Defender.
  We prove that any strategy $\beta'$ is also winning. 

  Let $\alpha$ be a strategy for \Intruder. Since $A$ is blindfold, by Lemma~\ref{blind-obs} we have that
  $\mathrm{\obs}(\alpha\widehat{~}\beta)=\mathrm{\obs}(\alpha\widehat{~}\beta')$, so 
  for every $n\in \mathbb{N}$, $I(\alpha\widehat{~}\beta'^{ \;n})=I(\alpha\widehat{~}\beta^n)\nsubseteq S$.

  So we have that the opacity-verify problem is equivalent to the
  opacity-guarantee problem in blindfold games. We now show that the
  opacity-verify problem is also equivalent to the
  complementary of the opacity-violate problem (decide whether
  $\forall \alpha, \exists \beta \mbox{ s.t. } \alpha\widehat{~}\beta \mbox{ is
  }S\mbox{-opaque}$ holds).

  Once again one implication is trivial : 
$$\forall \beta,\forall
  \alpha,\;\alpha\widehat{~}\beta \mbox{ is }S\mbox{-opaque}\Rightarrow
  \forall \alpha,\exists \beta,\;\alpha\widehat{~}\beta \mbox{ is
  }S\mbox{-opaque}$$

  Now the other way. Suppose that for any strategy $\alpha$ there is a
  strategy $\beta$ for \Defender such that $\alpha$ loses. Now take
  any couple of strategies $(\alpha,\beta')$.  We know that there
  exists a strategy $\beta$ such that $\alpha\widehat{~}\beta$ is
  $S$-opaque.  But we also know (Lemma~\ref{blind-obs}) that
  $\mathrm{\obs}(\alpha\widehat{~}\beta)=\mathrm{\obs}(\alpha\widehat{~}\beta')$
  because the game is blindfold, so once again for every $n\in
  \mathbb{N}$,
  $I(\alpha\widehat{~}\beta'^n)=I(\alpha\widehat{~}\beta^n)\nsubseteq
  S$.
\end{proof}

The idea behind this theorem is that in blindfold games with opacity
condition, the outcome of a play does not rely on \Defender's
behaviour but only on what \Intruder plays. Indeed, since he observes
nothing of what \Defender does, \Intruder's information set, and so
the winning condition, are only determined by the series of actions he
chooses. Thus, these games via a power-set construction can be seen as
(reachability) one-player games: each position is a reachable
information set $I$, at each step the unique player (\Intruder)
chooses an action $a\in act (I)$, where $I$ is the current position,
and moves to position $\Delta(I,a) $. Therefore, in blindfold games with opacity condition,
whether \Intruder has a winning strategy (\emph{i.e} a winning sequence
of actions), or \Defender wins whatever he does.

The determinacy of blindfold games with opacity condition
(Theorem~\ref{blind-determined}) is an immediate corollary of the
above Theorem~\ref{eq-g-v}. Also Theorem~\ref{PSPACEc-OGOV} results
from Theorems~\ref{eq-g-v} and \ref{complexity-verify}.

\section{Related work}
\label{related}

Opacity has mostly been studied in the framework of discrete-event
systems and their theory of control (\cite{saboori08b,dubreil2008opacity}). It
is both interesting and important to know to what extent the classical
problems in this field can be embedded into our games. We first
describe the discrete-event system setting, next we define the notion of opacity
in this framework. We finally propose a translation from the verification of opacity in this setting to 
the opacity-verify problem in games with
opacity condition.

First we recall that a \emph{a deterministic finite automaton (DFA)}
is a NFA $\mathcal A=(Q,\Sigma,\delta,q_0,Q_f)$ but with a unique
initial state $q_0$ and in which the transition relation $\delta : Q\times \Sigma \rightarrow 2^Q$ satisfies
$|\delta(q,a)|\leq 1$ for all states $q$ and input symbols $a$.

The problem of opacity is defined in \cite{dubreil2008opacity} with regards to
a LTS $G$ (labelled transition system, \emph{i.e} a DFA without
accepting states) and a confidential predicate $\phi$ over execution
traces of $G$, representable by a regular language $\mathcal
L_\phi\subseteq \Sigma^*$ where $\Sigma$ is the set of events of the
transition system. For convenience, we equivalently state it on a DFA
$\mathcal A_G^\phi$ representing the transition system together with
the secret predicate. The automaton $\mathcal A_G^\phi$ is simply the
synchronized product of $G$ with some complete DFA accepting $\mathcal
L_\phi$.  We denote by $\mathcal T(\mathcal A)\subseteq \Sigma^*$ the
set of execution traces of an automaton $\mathcal A$, and by $\mathcal
L(\mathcal A)$ the language accepted by $\mathcal A$, so we have that
$\mathcal T(\mathcal A_G^\phi)=\mathcal T(G)$ and $\mathcal L(\mathcal
A_G^\phi)=\mathcal T(G)\cap\mathcal L_\phi$. From now on, for a
DFA $\mathcal A$, a state $q$ and $w\in \mathcal T(\mathcal A)$,
$\delta(q,w)$ shall denote the only state it contains.

We consider a subset of events $\Sigma_a\subseteq\Sigma$ which denotes
the observation capabilities of a potential attacker of the system,
and we let $P_{\Sigma_a}$ be the \emph{projection} function from $\Sigma^*$ to
$\Sigma_a^*$. Two words $w$ and $w'$ are \emph{observationally
  equivalent} if $P_{\Sigma_a}(w)=P_{\Sigma_a}(w')$.  We denote by
$[w]_a=P_{\Sigma_a}^{-1}(P_{\Sigma_a}(w))$ the set of words in
$\Sigma^*$ that are observationally equivalent to the word $w$ with
regard to $\Sigma_a$.

\begin{definition}
  $\mathcal L_\phi$ is \emph{opaque} w.r.t. $\mathcal T(G)$ and $\Sigma_a$ if
  $$\forall w \in \mathcal T(G), [w]_a\cap \mathcal T(G) \nsubseteq \mathcal L_\phi$$
\end{definition}

This means that $\mathcal L_\phi$ is opaque w.r.t. $\mathcal T(G)$ and $\Sigma_a$ if, and
only if, whenever an execution trace of $G$ verifies the confidential
predicate $\phi$ there exists another possible execution trace
observationally equivalent that does not verify $\phi$.

We take an instance of the opacity verification problem, $\mathcal A_G^\phi=(Q,\Sigma,\delta,q_0^G,Q_f)$, and
we describe the construction of the game with opacity condition $A_G^\phi$ such that the following holds.

\begin{theorem}
\label{reduc-verify}
Verifying that $\mathcal L_\phi$ is opaque w.r.t $\mathcal T(G)$ and
$\Sigma_a$ is equivalent to deciding the opacity-verify problem in
$A_G^\phi$.
\end{theorem}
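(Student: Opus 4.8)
The plan is to build $A_G^\phi$ so that, along every play, \Intruder's information set coincides with the \emph{state estimator} of $\mathcal{A}_G^\phi$: the set of states that some execution trace observationally equivalent (for $P_{\Sigma_a}$) to the run produced so far can reach. I take as positions the states $Q$ of $\mathcal{A}_G^\phi$, declare the accepting states to be the secrets ($S=Q_f$), and add a distinguished initial position $v_0$ and a single non-secret \emph{safe} position. \Intruder's actions are the observable events $\Sigma_a$; from a state $q$, an event $a\in\Sigma_a$ enabled at $q$ leads, once \Defender has resolved the ensuing silent moves, to some state of the unobservable closure of $\delta(q,a)$, whereas every other (\emph{deviating}) event leads to the safe position, which then loops on itself. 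Two design choices are essential: (i) the unobservable events $u\in(\Sigma\setminus\Sigma_a)^*$ that follow an observed one are \emph{absorbed} into a single round, so that only the $\Sigma_a$-projection of the run, never the timing of silent moves, is observed; and (ii) all states of $\mathcal{A}_G^\phi$ share one observation $\gamma$, while the safe position is given a dedicated observation $\gamma_{safe}\neq\gamma$. As in Corollary~\ref{blindlemma}, the first round serves only to set the information set to the unobservable closure of $\{q_0^G\}$, i.e.\ to the estimator of the empty observation.

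The first step is to establish, by induction on the round using Definition~\ref{defIS}, the invariant: after a prefix in which \Defender has answered every event with a genuine ($\gamma$-)move realising a run whose projection is $\sigma$, the information set equals the estimator of $\sigma$; and once a deviating event is played the information set collapses to the non-secret safe position for ever after. Choice (ii) is exactly what validates the inductive step: when \Intruder plays $a$ and \Defender answers with a $\gamma$-position, the observation-refinement of Definition~\ref{defIS} intersects with $\gamma$ and thereby discards every state of the current estimator at which $a$ was deviating (those contribute only the $\gamma_{safe}$-labelled safe position), so the estimator is never contaminated by a spurious non-secret state. Since $S=Q_f$ and $\mathcal{A}_G^\phi$ is deterministic, the estimator of $\sigma$ is contained in $S$ if and only if every execution trace projecting to $\sigma$ reaches an accepting state, i.e.\ iff every trace observationally equivalent to the current one satisfies $\phi$.

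With this invariant the equivalence reduces to matching the two universal quantifiers. For the direction from the opacity-verify problem to opacity, assume every strategy of \Defender is winning and fix $w\in\mathcal{T}(G)$; consider the play in which \Intruder emits the observable events $P_{\Sigma_a}(w)$ and \Defender answers along the run of $w$ with genuine moves, which by the invariant makes the information set at the corresponding round equal to the estimator of $P_{\Sigma_a}(w)$. As that play is $S$-opaque, this estimator is not contained in $Q_f$, which is precisely opacity of $\mathcal{L}_\phi$ at $w$. Conversely, assume $\mathcal{L}_\phi$ opaque and take any $\alpha,\beta$; by the invariant every information set along $\alpha\widehat{~}\beta$ is either $\{v_0\}$, or the safe position, or the estimator of the projection of a genuine trace, and none of these is contained in $S$ (the first two because $v_0$ and the safe position are not secret, the last by opacity). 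Hence $\alpha\widehat{~}\beta$ is $S$-opaque and Property~\eqref{eq-Ove} holds.

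I expect the construction, not the final equivalence, to be the real difficulty. One must simultaneously hide the timing of unobservable events — so that information sets are indexed by $\Sigma_a$-projections rather than by full event sequences — and prevent \Intruder's freedom to fire an event disabled in part of the current estimator from injecting extraneous non-secret positions into it. Absorbing silent moves into one round settles the first point, and giving the safe position its own observation $\gamma_{safe}$ settles the second; checking carefully that the reachable information sets are then exactly the estimators of realizable observations (together with $\{v_0\}$ and the safe position) is the technical heart of the argument.
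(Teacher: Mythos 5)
Your proposal is correct, and it rests on the same central idea as the paper's proof: design $A_G^\phi$ so that \Intruder's information sets coincide with the state estimator of $\mathcal A_G^\phi$ (the paper's Lemma~\ref{lemma1} together with Lemma~\ref{lemma2}), take the secrets to be (suitably tagged) accepting states, and then match the two universal quantifications. The game mechanics, however, are genuinely different. In the paper \Intruder is completely passive: his only action is $\surd$, and \Defender chooses both the next visible event and the successor state in the $\epsilon$-free automaton $\mathcal A^\epsilon$; the visible event must therefore be communicated to \Intruder through the observation function, which is why the paper's positions are pairs $(q,x)$ tagged with the last event, the observations $\gamma_x$ are indexed by $\Sigma_a$, and the intersection step of Definition~\ref{defIS} is what splits \Defender's choices event by event. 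You instead let \Intruder choose the events, keep one observation $\gamma$ for all genuine states, and neutralize events disabled at the current state by routing them to a non-secret safe sink with its own observation $\gamma_{safe}$ --- essentially the deviating-move gadget of the paper's EXPTIME-hardness reduction in Section~\ref{sec-lowbounds} --- so that in your game the intersection step serves to discard the sink rather than to reveal the event. Both routes are sound for this particular theorem precisely because opacity-verify quantifies universally over both players, so only the set of plays (hence of reachable information sets) matters, not which player resolves which choice; the role swap would matter for the opacity-guarantee or opacity-violate problems. As for what each buys: the paper's passive-\Intruder design has no deviating moves, but it must assume (see its remark) that every state of $\mathcal A^\epsilon$ has some enabled visible event in order to keep $\Delta$ nonempty, whereas your safe sink removes any need for that assumption; the price is that your invariant genuinely depends on the observation split $\gamma_{safe}\neq\gamma$, since otherwise the non-secret sink would leak into every estimator and trivialize the opacity condition --- your identification of that point as the technical heart is accurate.
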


The construction starts from $\mathcal A_G^\phi$ where transitions labelled by events in
$\Sigma\backslash\Sigma_a$ are turned into $\epsilon$-transitions. Then we
remove those $\epsilon$-transitions as described in \cite{hopcroft2006automata} by taking the
$\epsilon$-closure of the transition function, and we obtain the $\epsilon$-free
nondeterministic finite automaton 
$\mathcal A^{\epsilon}=(Q,\Sigma_a,\Delta^\epsilon,Q_0^\epsilon,Q_f)$.

In this automaton, transitions are all labelled by observable events.
One should think of the nondeterminism in this automaton as the uncertainty  the attacker 
has concerning the behaviour of the system. More precisely, she does not know when an observable event 
is triggered whether the system takes ``invisible'' transitions or not, may it be
before, after, or both before and after the observable one.

We need the following lemma, which is a mere consequence of the construction :

\begin{lemma}
\label{lemma2}
  $$\forall w\in\Sigma_a^*, \Delta^\epsilon(Q_0^\epsilon,w)=\{\delta(q_0^G,w')\;|\; w'\in [w]_a\cap\mathcal T(G)\}$$
\end{lemma}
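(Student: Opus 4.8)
The plan is to prove the identity by induction on the length of $w$, by unfolding the semantics of the $\epsilon$-elimination construction. Write $U := \Sigma\setminus\Sigma_a$ for the unobservable events, which are exactly the ones relabelled as $\epsilon$, and for a state $q$ let $\mathrm{Ecl}(q)$ denote the set of states reachable from $q$ in $\mathcal A_G^\phi$ by reading a word over $U$ (extended to sets by union). Recall that the construction of \cite{hopcroft2006automata} sets $Q_0^\epsilon=\mathrm{Ecl}(q_0^G)$ and, for $a\in\Sigma_a$, $\Delta^\epsilon(q,a)=\mathrm{Ecl}(\delta(q,a))$ whenever $\delta(q,a)$ is defined, so that every set reached in $\mathcal A^\epsilon$ stays closed under $\mathrm{Ecl}$. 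The key observation is that reading a single $a$ from a closed set then amounts to following a word in $U^*\,a\,U^*$ in $\mathcal A_G^\phi$, and hence that a whole word $w=a_1\cdots a_k\in\Sigma_a^*$ drives $\mathcal A^\epsilon$ from $Q_0^\epsilon$ to exactly those states reachable from $q_0^G$ by some trace $v$ with $P_{\Sigma_a}(v)=w$. Since $w\in\Sigma_a^*$ gives $P_{\Sigma_a}(w)=w$ and thus $[w]_a=P_{\Sigma_a}^{-1}(w)$, this is precisely the claimed right-hand side.

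Concretely I would establish the invariant $\Delta^\epsilon(Q_0^\epsilon,w)=\{\delta(q_0^G,v)\mid v\in\mathcal T(G),\,P_{\Sigma_a}(v)=w\}$ by induction on $|w|$. The base case $w=\varepsilon$ holds because $\Delta^\epsilon(Q_0^\epsilon,\varepsilon)=Q_0^\epsilon=\mathrm{Ecl}(q_0^G)$, and a trace $v$ with $P_{\Sigma_a}(v)=\varepsilon$ is exactly a trace lying in $U^*$. For the inductive step $w=w'a$ with $a\in\Sigma_a$, I would use $\Delta^\epsilon(Q_0^\epsilon,w'a)=\bigcup_{q\in\Delta^\epsilon(Q_0^\epsilon,w')}\mathrm{Ecl}(\delta(q,a))$ together with the induction hypothesis: every such $q$ equals $\delta(q_0^G,v')$ for some trace $v'$ with $P_{\Sigma_a}(v')=w'$, and appending $a$ then an unobservable suffix $u\in U^*$ yields $\delta(q_0^G,v'au)$ with $P_{\Sigma_a}(v'au)=w'a$; determinism of $\mathcal A_G^\phi$ ensures each $\delta(q_0^G,\cdot)$ is a single state, so the two sides agree element-wise.

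For the converse inclusion in the inductive step I would take any trace $v$ of $G$ with $P_{\Sigma_a}(v)=w'a$ and split it at its last observable letter, which is necessarily $a$: write $v=v'\,a\,u$ with $P_{\Sigma_a}(v')=w'$ and $u\in U^*$. Then $\delta(q_0^G,v')$ lies in $\Delta^\epsilon(Q_0^\epsilon,w')$ by the hypothesis, and $\delta(q_0^G,v)=\delta(\delta(q_0^G,v'),au)\in\mathrm{Ecl}(\delta(\delta(q_0^G,v'),a))=\Delta^\epsilon(\delta(q_0^G,v'),a)$, whence $\delta(q_0^G,v)\in\Delta^\epsilon(Q_0^\epsilon,w'a)$. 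The only delicate point, and thus the main obstacle, is to fix the exact flavour of the $\epsilon$-elimination of \cite{hopcroft2006automata} (closure taken after each observable transition, with the initial closure folded into $Q_0^\epsilon$, and closedness maintained as an invariant) and to verify that it matches reading $v$ as a word in $U^*a_1U^*\cdots a_kU^*$, so that the unobservable transitions occurring \emph{before, after, or both} around each observable event are all counted exactly once; the decomposition of $v$ at its last observable letter is what keeps this bookkeeping clean.
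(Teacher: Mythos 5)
Your proof is correct: the paper itself gives no argument for this lemma, dismissing it as ``a mere consequence of the construction'' of $\mathcal A^\epsilon$, and your induction on $|w|$ --- unfolding the $\epsilon$-closure semantics and splitting each trace at its last observable letter --- is exactly the standard formalization of that claim. The one delicate point you flag, fixing the precise flavour of the Hopcroft $\epsilon$-elimination (initial closure folded into $Q_0^\epsilon$, closure applied after each observable transition, closedness of reached sets as an invariant), is the same convention the paper implicitly relies on, and you handle it correctly.
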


We can now define the game $A_G^\phi=(V,\Delta,\mathrm{\obs},\mathrm{\act},v_0,S)$ over $\Sigma'=\{\surd\}$ 
and $\Gamma=\{\gamma_x\;|\;x\in\Sigma_a\}\cup\{\gamma_\epsilon\}$:
\begin{itemize}
  \item $V=Q\times\Sigma_a\cup Q_0^\epsilon\times\{\epsilon\}\cup\{v_{init}\}$.
  \item $\Delta(v,\surd)=\begin{cases}\{(q',y)\;|\;y\in\Sigma_a, q'\in
      \Delta^\epsilon(q,y)\} & \mbox{if }v=(q,x)\\
      \{(q,\epsilon)\;|\;q\in Q_0^{\epsilon}\} & \mbox{if }v=v_{init}\end{cases}$
  \item $\forall (q,x)\in V,\;\mathrm{\obs}((q,x))=\gamma_x$, and $\mathrm{\obs}(v_{init})=\gamma_\epsilon$
  \item $\forall v\in V,\;\mathrm{\act}(v)=\{\surd\}$
  \item $S=\{(q_f,x)\;|\;q_f\in Q_f, x\in\Sigma_a\cup\{\epsilon\}\}$\hspace{1cm}and\hspace{1cm}$v_0=v_{init}$
\end{itemize}

\begin{remark}
  Without loss of generality we can assume that in every state $q$ of $\mathcal A^\epsilon$
there exists an event $y$ in $\Sigma_a$ such that $\Delta^\epsilon(q,y)$ is not empty. So
in every position $(q,x)$ in $V$, $\Delta((q,x),\surd)$ is not empty, and the game can always continue.
\end{remark}

In this game, \Intruder is passive. He only observes \Defender, who
simulates the system $G$. If the game is in position $(q,x)$, it
represents that we are in state $q$ in the system $G$, and that the
last visible event was $x$ (if $x=\epsilon$, no observable event
happened yet). \Intruder observes $\gamma_x$, \emph{i.e} the only information
he gains during a play is the sequence of visible events.
When \Defender plays, he chooses a visible event $y$ and a state
reachable from $q$ through $y$ in $\mathcal A^\epsilon$, which can be
seen as choosing as many invisible transitions in $G$ as he wishes,
plus one visible amongst them, $y$. We shall note $\alpha_\surd$ the
only possible strategy for \Intruder, which is to always play $\surd$.

$v_{init}$ is the initial position, that can never be reached after
the first move. It is used to initialize \Intruder's information set
to $Q_0^\epsilon\times\{\epsilon\}$ (these are the only reachable
positions from $v_{init}$, and they have the same observation,
$\gamma_\epsilon$). This represents the set of states in $G$ that are
reachable before any observable transition is taken.

We start the proof of Theorem~\ref{reduc-verify} by establishing this central lemma.

\begin{lemma}
  \label{lemma1}
  Let $\rho^{n+1}=v_{init}\surd (q_0,\epsilon)\surd (q_1,x_1)\ldots \surd (q_n,x_n)$ be a prefix of a play, with $n\geq 0$.
  Then $\{q\;|\;(q,x_n)\in I(\rho^{n+1})\}=\Delta^\epsilon(Q_0^\epsilon,x_1\ldots x_n)$ and for all $(q,x)$ in 
  $I(\rho^{n+1})$, $x=x_n$.
\end{lemma}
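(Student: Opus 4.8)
The plan is to prove Lemma~\ref{lemma1} by induction on $n$, since the information set itself is defined inductively (Definition~\ref{defIS}) and the claim has exactly the shape of an inductive invariant: it simultaneously asserts a statement about the state-projection of $I(\rho^{n+1})$ and a uniformity statement (all elements share the same second component $x_n$). Carrying both conjuncts through the induction together is essential, because the uniformity of the observation $x_n$ is precisely what lets me invoke the transition structure cleanly at the next step.

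\textbf{Base case.} For $n=0$ the prefix is $\rho^1 = v_{init}\surd(q_0,\epsilon)$. Here $I(\rho^0)=\{v_{init}\}$, and by the definition of $\Delta$ on $v_{init}$ together with the observation function, $I(\rho^1)=\Delta(\{v_{init}\},\surd)\cap\mathrm{\obs}^{-1}(\gamma_\epsilon)=Q_0^\epsilon\times\{\epsilon\}$. So $\{q\mid(q,\epsilon)\in I(\rho^1)\}=Q_0^\epsilon=\Delta^\epsilon(Q_0^\epsilon,\varepsilon)$ (the empty word), and every element has second component $\epsilon=x_0$, matching both conjuncts. The reader should note that this is exactly the point of introducing $v_{init}$, as the surrounding text explains.

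\textbf{Inductive step.} Assume the claim for $n$, so $\{q\mid(q,x_n)\in I(\rho^{n+1})\}=\Delta^\epsilon(Q_0^\epsilon,x_1\ldots x_n)$ and every element of $I(\rho^{n+1})$ has second component $x_n$. I extend the play by one round to $(q_{n+1},x_{n+1})$. By Definition~\ref{defIS}, $I(\rho^{n+2})=\Delta(I(\rho^{n+1}),\surd)\cap\mathrm{\obs}^{-1}(\gamma_{x_{n+1}})$. The uniformity half of the hypothesis guarantees every source position is of the form $(q,x_n)$ with $q\in\Delta^\epsilon(Q_0^\epsilon,x_1\ldots x_n)$; applying the definition of $\Delta$ on such positions yields $\{(q',y)\mid y\in\Sigma_a,\,q'\in\Delta^\epsilon(q,y)\}$, and intersecting with the observation $\gamma_{x_{n+1}}$ keeps precisely those pairs whose second component is $x_{n+1}$. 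This immediately delivers the uniformity conjunct for step $n+1$, and for the first conjunct it gives $\{q'\mid(q',x_{n+1})\in I(\rho^{n+2})\}=\bigcup_{q\in\Delta^\epsilon(Q_0^\epsilon,x_1\ldots x_n)}\Delta^\epsilon(q,x_{n+1})=\Delta^\epsilon(\Delta^\epsilon(Q_0^\epsilon,x_1\ldots x_n),x_{n+1})=\Delta^\epsilon(Q_0^\epsilon,x_1\ldots x_nx_{n+1})$, using the standard extension of a transition relation to words.

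The step I expect to require the most care is bookkeeping the intersection with the observation class $\gamma_{x_{n+1}}$, since $\Delta$ from a position offers transitions labelled by \emph{all} observable events $y\in\Sigma_a$ at once, and it is only the filtering by $\mathrm{\obs}$ that selects the single value $x_{n+1}$ actually observed. The uniformity conjunct carried inductively is what makes this filtering collapse cleanly to a single letter rather than a mix; without it the two source-components could differ and the composition-of-transitions identity would not apply verbatim. The rest is routine unfolding of Definition~\ref{defIS} and the definition of $\Delta$.
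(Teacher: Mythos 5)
Your proof is correct and follows essentially the same route as the paper's: an induction on $n$ that unfolds Definition~\ref{defIS} together with the definition of $\Delta$ in $A_G^\phi$, filters by the observation class, and closes with the composition identity for $\Delta^\epsilon$ over words. The only (harmless, arguably cleaner) difference is that you carry the uniformity conjunct through the induction explicitly, whereas the paper dismisses it upfront as immediate from the definition of \obs{} and then uses it silently in the inductive step.
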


\begin{proof}

The latter fact is obvious, from the definition of observations. Considering the former fact, we prove it by induction on $n$. 

\begin{description}
\item[$n=1$ :]
  $I(\rho^1)=\Delta(\{v_{init}\},\surd)\cap\gamma_\epsilon=\{(q_0,\epsilon)\;|\;q_0\in
  Q_0^\epsilon\}$, so
 $\{q\;|\;(q,\epsilon)\in I(\rho^1)\}=Q_0^\epsilon=\Delta^\epsilon(Q_0^\epsilon, \epsilon)$

\item[$n+1$ :]~\\ $ \begin{array}{lclr}
    \{q\;|\;(q,x_{n+1})\in I(\rho^{n+2})\} & = & \{q\;|\;(q,x_{n+1})\in \Delta(I(\rho^{n+1}),\surd)\cap\mathrm{\obs}((q_{n+1},x_{n+1}))\}\\
    & = & \{q\;|\;(q,x_{n+1})\in \Delta(I(\rho^{n+1}),\surd)\}\\
    & = & \{q\;|\;\exists (q',x_n)\in I(\rho^{n+1}),q\in \Delta^\epsilon(q',x_{n+1})\}\\
    & = & \{q\;|\;\exists q'\in \Delta^\epsilon(Q_0^\epsilon,x_1\dots x_n), q\in \Delta^\epsilon(q',x_{n+1})\}\\
    & = & \Delta^\epsilon(Q_0^\epsilon,x_1\ldots x_{n+1})
\end{array}$
\end{description}
\end{proof}
 
We move on to the proof of Theorem~\ref{reduc-verify}. 
Suppose that every strategy $\beta$ is winning for \Defender. We prove that $\mathcal L_\phi$ is
opaque w.r.t $\mathcal T(G)$ and $\Sigma_a$. Take a word $w$ in $\mathcal T(G)$. There exists a prefix of a play
$\rho^{n+1}=v_{init}\surd (q_0,\epsilon)\surd (q_1,x_1)\ldots \surd (q_n,x_n)$ such that $x_1\ldots x_n=P_{\Sigma_a}(w)$.
So there exists a strategy $\beta$ such that $\alpha_\surd\widehat{~}\beta^{n+1}=\rho^{n+1}$. With lemma~\ref{lemma1} and
\ref{lemma2} we have that $\{q\;|\;(q,x_n)\in I(\rho^{n+1})\}=\{\delta(q_0^G,w)\;|\;w\in [x_1\ldots x_n]_a\cap\mathcal
T(G)\}$. Since $\beta$ is winning, $\{q\;|\;(q,x_n)\in I(\rho^{n+1})\}\nsubseteq Q_f$, so 
there exists $w'$ in  $[x_1\ldots x_n]_a\cap\mathcal T(G) = [w]_a\cap\mathcal T(G)$ such that
$\delta(q_0^G,w')\notin Q_f$. This implies that $[w]_a\cap\mathcal T(G)\nsubseteq \mathcal L_\phi$. 

Now suppose that
$\mathcal L_\phi$ is opaque w.r.t $\mathcal T(G)$ and take $\beta$ a
strategy for \Defender in $A_G^\phi$, we prove that $\beta$ is
winning. Let $\rho_\beta=\alpha_\surd\widehat{~}\beta$ be the only possible
play induced by $\beta$. Take a prefix
$\rho_\beta^{n+1}=v_{init}\surd (q_0,\epsilon)\surd (q_1,x_1)\ldots \surd (q_n,x_n)$ of
this play.  By Lemma~\ref{lemma1} and \ref{lemma2} again,
$\{q\;|\;(q,x_n)\in I(\rho_\beta^{n+1})\}=\{\delta(q_0^G,w)\;|\;w\in [x_1\ldots x_n]_a\cap\mathcal
T(G)\}$. Since an information set is never empty, there exists $w$ in
$[x_1\ldots x_n]_a\cap\mathcal T(G)$, and because $\mathcal L_\phi$ is
opaque w.r.t $\mathcal T(G)$, $[x_1\ldots x_n]_a\cap\mathcal
T(G)\nsubseteq \mathcal L_\phi$. So there exists $w'$ in $[x_1\ldots x_n]_a\cap\mathcal T(G)$
such that $\delta(q_0^G,w')=q\notin Q_f$, hence $(q,x_n)\notin S$ and $I(\rho_\beta^n)\nsubseteq S$.
$\beta$ is winning.

\section{Discussion on complexity}
\label{sec-discussion}

Solving safety games with perfect-information is in PTIME, and solving parity games with
perfect information is known to be in $NP\cap\mbox{co-}NP$
\cite{jurdzinski1998deciding}. However we have seen that deciding
whether \Defender, who has perfect-information, has a winning strategy
in a game with opacity condition, is EXPTIME-complete, even if we let \Intruder play
with perfect-information (in the sense that his strategies are based on actual prefixes of plays instead of
their observation). So the gap between deciding the existence of a winning strategy
for a player in perfect-information games and for \Defender in a game
with opacity condition does not come from the fact that \Intruder has
imperfect information, but rather from the nature of the winning
condition itself, which is based on the notion of information set, and forces \Defender to keep track 
of what \Intruder's information set along the game is.

Similarly, verifying that a finite-state strategy is winning in a safety
perfect-information game can be done in $\PTIME$, whereas we have shown in \cite{bozzelliMaubertPinchinat11a} 
that in games with opacity condition, deciding whether a finite-state (and even memoryless) strategy of
\Defender is winning is PSPACE-complete in the size of the arena and
the memory of the strategy (we
define in a classic way the size of the memory of a strategy as the number of states of
an I/O automaton realizing the strategy \cite{dziembowski1997much}). The idea is that one has to check
that the strategy is winning not in all positions, but in all
information sets.  Concerning the size of the memory needed for
\Defender's strategies, we know that an exponential memory is
sufficient because if there is a winning strategy there is a
memoryless one in the powerset construction. The lower bound for the
needed memory is still an open problem.

\section{Conclusion and perspectives}
\label{sec-perspectives}

Following \cite{maubert2009games}, we have extended the study of
games with opacity condition. The opacity condition is an atypical
winning condition in imperfect information arenas aiming at capturing
security aspects of computer systems. Since games with opacity
condition are not determined in general, two dual problems need being
considered: the opacity-violate problem and the opacity-guarantee
problem, focusing on the player who has imperfect information and on the
player who has perfect information respectively. 
The latter problem is usually equivalent to solving the underlying
perfect information game, which explains why it has never been considered;
but the fact that our winning condition is based on information sets
makes the problem relevant.  For both problems, simple power-set
constructions apply to convert such games into perfect information
ones, that can be solved in polynomial time, hence their upper bound
is $\EXPTIME$. On the contrary, the matching $\EXPTIME$ lower bound
for the opacity-guarantee problem, where the main player has perfect
information, was unknown until now and relies on an elegant reduction
from the empty input string acceptance problem for linearly-bounded
alternating Turing machines. The key point is to encode 
configurations by information sets. The reduction and its
correctness proof are very technical, but we could provide an
intuitive informal description.

Finally, we focused on the particular case of blindfold games which
offers specific results such as determinacy
(Theorem~\ref{blind-determined}) and $\PSPACE$-complete complexities
(Theorem~\ref{PSPACEc-OGOV}). The main tool to obtain these results is
the opacity-verify problem which addresses the question whether any
strategy of \Defender is winning. The fact that blindfold games with
opacity condition can be seen as one-player games makes this problem
relevant and explains why it is equivalent to the opacity-guarantee
problem and to the complement of the opacity-violate problem in the blindfold setting, as we established. 
We also proved that
it is $\PSPACE$-complete, by providing a $\PSPACE$ algorithm and a
reduction from the nondeterministic finite automata universality
problem. The opacity-verify problem is all the more interesting to
consider that it naturally demonstrates how the paradigm of opacity
condition embraces opacity issues investigated in the recent
literature of Control Theory \cite{saboori08b,dubreil2008opacity}.

Games with opacity condition open a novel field in the theoretical
aspects of games with imperfect information by putting the emphasis on
the player who has perfect information. From this point of view, plethora
of questions need being addressed, among which their connection with
language-theoretic issues (the synchronizing/directing word problem
\cite{Cerny64,pin,CernyPirickaRosenauerova64}, controller synthesis to enforce the opacity of a
language \cite{dubreil2008opacity}), their logical
foundations, and their algorithmic aspects. \\

\section*{Acknowledgements} We are very grateful to the reviewers for relevant comments and suggestions that significantly helped 
in highlighting the conceptual content of the paper.


\end{document}